\newcommand{\N}{\ensuremath{\mathbb{N}}}
\newtheorem{theorem}{Theorem}
\newtheorem{proposition}[theorem]{Proposition}
\newenvironment{proof}[1][Proof]{\begin{trivlist}
\item[\hskip \labelsep {\bfseries #1}]}{\end{trivlist}}
\newcommand{\qed}{$\quad \Box$}
\begin{document}

\title{Information entropy of Gegenbauer polynomials of integer
parameter}

\author{Julio I. de Vicente$^1$, Silvia Gandy$^2$ and Jorge
S\'anchez-Ruiz$^{1,3}$}

\address{$^1$ Departamento de Matem\'aticas, Universidad Carlos III de
Madrid, \\ Avda.\ de la Universidad 30, E-28911 Legan\'es, Madrid,
Spain}

\address{$^2$ Fakult\"at f\"ur Mathematik, Technische Universit\"at M\"unchen,
\\ Boltzmannstr.\ 3, D-85747 Garching b.\ M\"unchen, Germany}

\address{$^3$ Instituto Carlos I de F\'{\i}sica Te\'orica y
Computacional, Universidad de Granada, E-18071 Granada, Spain}

\eads{\mailto{jdvicent@math.uc3m.es}, \mailto{gandy@in.tum.de},
\mailto{jsanchez@math.uc3m.es}}

\begin{abstract}
The position and momentum information entropies of $D$-dimensional
quantum systems with central potentials, such as the isotropic
harmonic oscillator and the hydrogen atom, depend on the entropies
of the (hyper)spherical harmonics. In turn, these entropies are
expressed in terms of the entropies of the Gegenbauer
(ultraspherical) polynomials $C_n^{(\lambda)}(x)$, the parameter
$\lambda$ being either an integer or a half-integer number. Up to
now, however, the exact analytical expression of the entropy of
Gegenbauer polynomials of arbitrary degree $n$ has only been
obtained for the particular values of the parameter $\lambda=0,1,2$.
Here we present a novel approach to the evaluation of the
information entropy of Gegenbauer polynomials, which makes use of
trigonometric representations for these polynomials and complex
integration techniques. Using this method, we are able to find the
analytical expression of the entropy for arbitrary values of both
$n$ and $\lambda\in\mathbb{N}$.
\end{abstract}

\ams{30E20, 33B10, 33C45, 33F10, 42C05, 81Q99, 94A17}

\pacs{03.67.-a, 02.30.Gp}

\submitto{\JPA}

\maketitle

\section{Introduction} \label{section1}

According to Shannon's information theory \cite{Sha}, the only
rigorous measure of the uncertainty or lack of information
associated to a continuous random variable $X$ with density
function $\rho(x)$, $x \in \mathbb{R}^D$, is the entropy
\begin{equation}\label{entropy}
H(X) = - \int \rho(x) \log \rho(x) \, \rmd x \;.
\end{equation}
In particular, when $\rho (x)$ is the single-particle probability
density for position of a quantum system, $H(X)$ is the only
rigorous measure of the uncertainty in the localization of the
particle in position space. The momentum entropy $H(P)$ can be
defined likewise from the single-particle density of momentum
$\gamma(p)$. In the simplest case of a single-particle system
described in position space by the wave function $\psi(x)$, we
have that $\rho(x) = \vert \psi(x) \vert^2$ and $\gamma(p) = \vert
\phi(p) \vert^2$, where the wave function in momentum space
$\phi(p)$ is the Fourier transform of $\psi(x)$. The sharp
inequality \cite{Bec,Bia}
\begin{equation}\label{bbm}
H(X) + H(P) \geq  D \left( 1 + \log \pi \right)
\end{equation}
places a nontrivial lower bound on the sum of the uncertainties in
position and momentum, so it provides a quantitative formulation of
the position-momentum uncertainty principle. Using the variational
inequality that relates information entropy and standard deviation
for an arbitrary $D$-dimensional random variable \cite{Sha,Bia},
\begin{equation}
H(A) \leq \frac{D}{2} \left( 1 + \log \frac{2\pi (\Delta A)^2}{D}
\, \right),
\end{equation}
the entropic uncertainty relation (\ref{bbm}) leads to the
well-known Heisenberg uncertainty relation
\begin{equation}
\Delta X \Delta P \geq \frac{D}{2}\;,
\end{equation}
which proves the former to be stronger than the latter.

For many important quantum systems, such as $D$-dimensional harmonic
oscillator and hydrogen atom, the calculation of position and
momentum information entropies involves the evaluation of integrals
of the form
\begin{equation} \label{eop}
E(p_n) = - \int_a^b \big(p_n(x)\big)^2 \log \big(p_n(x)\big)^2 \,
\omega(x) \, \rmd x \,,
\end{equation}
where $\{ p_n(x) \}$ denotes a polynomial sequence (${\rm deg} \,
p_n(x) = n$) orthogonal on $[a,b] \subseteq \mathbb{R}$ with
respect to the weight function $\omega(x)$. During the last decade
there has been an intense activity in the study of these
integrals, motivated not only by their relevance to quantum
physics but also by their close relationship to other interesting
mathematical objects, such as the $L^p$-norms or the logarithmic
potentials of the polynomials $p_n(x)$. A survey on the
state-of-the-art in this field up to year 2001 can be found in
\cite{review}.

The calculation of the entropic integrals $E(p_n)$ is generally a
very difficult task, and in most cases only asymptotic results for
large values of $n$ are known \cite{review}. In fact, since all the
zeros of $p_n$ are simple and belong to $(a,b)$, when $n$ is not
very small even a numerical computation of $E(p_n)$ poses serious
difficulties due to the strongly oscillatory behaviour of the
integrand in (\ref{eop}). In this respect it is worth mentioning
Ref.\ \cite{Buy04}, which presents an efficient algorithm for the
numerical evaluation of $E(p_n)$ in the case when the interval
$(a,b)$ is finite.

Closed analytical formulas for $E(p_n)$ are only known for a few
particular cases of the Gegenbauer or ultraspherical polynomials
$C_n^{(\lambda)}$. We recall that these polynomials are defined as
(see, e.g., \cite[Sec.\ 4.7]{Sze})
\begin{equation}
C_n^{(\lambda)} (x) = \frac{(2 \lambda)_n}{(\lambda +
\frac{1}{2})_n} P_n^{\left(
\lambda-\frac{1}{2},\lambda-\frac{1}{2} \right)}(x) \;,
\end{equation}
where $(a)_n = \Gamma(a+n) / \Gamma(a)$ denotes the Pochhammer
symbol and $P_n^{(\alpha,\beta)}(x)$ are Jacobi polynomials,
\begin{equation}
P_n^{(\alpha,\beta)}(x) = \frac{(\alpha +1)_n}{n!} \hspace{0.1cm}
{_2F_1} \bigg( \begin{array}{c} -n, \, n+\alpha+\beta+1 \\ \alpha +1
\end{array} \bigg| \frac{1-x}{2} \bigg) \,.
\end{equation}
For $\lambda > - \frac{1}{2}$, Gegenbauer polynomials form an
orthogonal sequence on the interval $[-1,1]$ with respect to the
weight function $w_{\lambda}(x) = (1-x^2)^{\lambda - \frac{1}{2}}$,
\begin{equation} \label{gegengauer}
\int_{-1}^1 C_n^{(\lambda)} (x) C_m^{(\lambda)} (x) (1-x^2)^{\lambda
- \frac{1}{2}} \, \rmd x = \frac{2^{1-2 \lambda} \, \pi \, \Gamma
(n+2 \lambda)}{(n+ \lambda) \, n! \, [ \Gamma (\lambda) ]^2} \,
\delta _{n,m} \;.
\end{equation}
The information entropies of Gegenbauer polynomials, on which we
focus in the present paper, are thus given by
\begin{equation}\label{entrgeg}
E(C_n^{(\lambda)})=-\int_{-1}^1 \big(C_n^{(\lambda)}(x)\big)^2
\log\big(C_n^{(\lambda)}(x)\big)^2(1-x^2)^{\lambda-\frac{1}{2}} \,
\rmd x \;.
\end{equation}

The integrals $E(C_n^{(\lambda)})$ are especially relevant in the
case when $\lambda$ is a non-negative integer or half-integer
number, due to the relationship between the corresponding Gegenbauer
polynomials and (hyper)spherical harmonics. As a consequence, these
integrals appear in the calculation of the angular component of
information entropies in both position and momentum space for any
$D$-dimensional ($D\geq 2$) quantum-mechanical system with a central
potential, such as the isotropic harmonic oscillator or the hydrogen
atom (radially symmetric Coulomb potential)
\cite{review,Yan94,Deh97,Yan99}. They also control the radial
component of the information entropy in momentum space for the
$D$-dimensional hydrogen atom \cite{review,Yan94,Deh97}.

Instead of using the standard definition of Gegenbauer
polynomials, it is often more convenient to work with the
polynomials
\begin{equation}
\widehat{C}_n^{(\lambda)} (x) = \left( \frac{(n+\lambda) \,
n!}{\lambda \, (2\lambda)_n} \right)^{\! \! \frac{1}{2}}
C_n^{(\lambda)} (x) \,,
\end{equation}
which are orthonormal on $[-1,1]$ with respect to the probability
density
\begin{equation}
\widehat{w}_{\lambda}(x) = \frac{\Gamma (\lambda +1)}{\sqrt{\pi}
\, \Gamma (\lambda + \frac{1}{2})} (1-x^2)^{\lambda-\frac{1}{2}}
\;.
\end{equation}
The corresponding entropies,
\begin{equation} \label{egop2}
E(\widehat{C}_n^{(\lambda)}) = - \int_{-1}^1
[\widehat{C}_n^{(\lambda)}(x)]^2 \log \,
[\widehat{C}_n^{(\lambda)}(x)]^2 \, \widehat{w}_{\lambda}(x) \, \rmd
x \;,
\end{equation}
are related to $E(C_n^{(\lambda)})$ by the formula
\begin{equation} \label{eogeonrel}
E(\widehat{C}_n^{(\lambda)}) = \log \left( \frac{\lambda \,
(2\lambda)_n}{(n+\lambda) \, n!} \right) + \frac{\Gamma (\lambda)
(n+\lambda) \, n!}{\sqrt{\pi} \, \Gamma \big( \lambda +
\frac{1}{2} \big) (2\lambda)_n} E(C_n^{(\lambda)}) \;,
\end{equation}
which readily follows from the previous definitions by taking into
account the orthogonality relation (\ref{gegengauer}).

The simplest particular cases of Gegenbauer polynomials are the
Chebyshev polynomials of the first and second kind,
\begin{equation}
T_n(x) = \lim_{\lambda \rightarrow 0} \frac{n!}{(2 \lambda)_n}\,
C_n^{(\lambda)}(x)\;, \quad U_n(x) = C_n^{(1)}(x)\;.
\end{equation}
For both of these families, information entropies can be computed in
closed analytical form, the results being \cite{Yan94,Deh97}
\begin{eqnarray}\label{echebyshev1}
E(\widehat{T}_n) = \cases{0 & if $n=0$\,, \\
  \log2-1 & if $n\neq0$\,, \\} \\ \label{echebyshev2}
E(\widehat{U}_n)=-\frac{n}{n+1}\;.
\end{eqnarray}
In the $\lambda = 2$ case, it was first proved in \cite{Buyarov:97}
that
\begin{equation} \label{l2entropy}
\fl E(\widehat{C}_n^{(2)}) = - \log \left( \frac{3(n+1)}{n+3}
\right) - \frac{n(n^2+2n-1)}{(n+1)(n+2)(n+3)} -
\frac{2}{\sqrt{(n+1)^3 (n+3)^3}}
\frac{T_{n+2}^{'''}(\xi)}{T_{n+2}^{''}(\xi)} \,,
\end{equation}
where
\begin{equation}
\xi = \frac{n+2}{\sqrt{(n+1)(n+3)}} \,,
\end{equation}
and this result was later simplified to \cite{Buy00}
\begin{equation} \label{gegl2entropy}
\fl E(\widehat{C}_n^{(2)}) = - \log \left( \frac{3(n+1)}{n+3}
\right) - \frac{n^3-5n^2-29n-27}{(n+1)(n+2)(n+3)} - \frac{1}{n+2}
\left( \frac{n+3}{n+1} \right)^{n+2} \,.
\end{equation}

In the same work \cite{Buy00}, it was also obtained the following
generalization of (\ref{l2entropy}) to arbitrary integer values of
the parameter, $\lambda = l \in \mathbb{N}$:
\begin{equation}\label{buyarov}
E(\widehat{C}_n^{(l)}) = -s_{nl}-r_{nl} \sum_{j=1}^{2l-2}
(1-\xi^2_j) \frac{H(\xi_j)}{P'(\xi_j)}
\frac{\widehat{C}_{n-1}^{(l+1)}(\xi_j)}{\widehat{C}_n^{(l)}(\xi_j)}
\,,
\end{equation}
where $s_{nl}$ and $r_{nl}$ are known constants depending only on
$n$ and $l$, the auxiliary polynomials $P$ and $H$ are defined
from the sequence $\{ P_k \}$ (${\rm deg} \, P_k = k$) generated
by the recurrence relation
\begin{equation}
\fl P_{k+1}(x) = (2l-2k-3) x P_k (x) - (n+k+1)(n+2l-k-1) (1-x^2)
P_{k-1}(x)
\end{equation}
from the initial values $P_{-1}(x)=0, P_0(x)=1$ through the
formulas
\begin{equation}
P(x) = P_{2l-2} (x) \;, \quad H(x) = \sum_{s=0}^{2l-2} (-1)^s
P_{s-1}(x) P_{2l-s-3}(x) \;,
\end{equation}
and $\xi_j$ ($j = 1, 2, \ldots, 2l-2$) denote the zeros of $P$.
The explicit expression of the polynomial $P$ was later found to
be \cite{San03}
\begin{equation} \label{pxexplicit}
\fl P(x) = \frac{(-1)^{l-1} (n+2l-1)!}{(n+l) n!} \sum_{\mu=0}^{l-1}
\frac{(1-l)_{\mu} (l)_{\mu} (1/2)_{\mu}}{(1-n-l)_{\mu} (1+n+l)_{\mu}
\, \mu!} (1-x^2)^{l-1-\mu} \,.
\end{equation}

Regretfully, (\ref{buyarov}) is not easy to use in practice.
Furthermore, it is not a completely analytical formula save for
small values of $l$ since, as we readily see from
(\ref{pxexplicit}), the zeros $\xi_j$ of $P$ have to be determined
numerically when $l\geq6$\footnote{Likewise, the general expression
of $E(C_n^{(\lambda)})$ given in \cite{Yan99} is not completely
analytical save for small values of $n$, since it is expressed in
terms of the zeros of $C_n^{(\lambda)}(x)$.}.

As first pointed out in \cite{Sanpreprint}, the entropy of Chebyshev
polynomials of the first and second kind can be easily computed by
direct calculation of the corresponding integrals by using the
well-known trigonometric representations
\begin{eqnarray}\label{trigt}
T_n(\cos\theta)=\cos n\theta\;,\quad
U_n(\cos\theta)=\frac{\sin(n+1)\theta}{\sin\theta} \;,
\end{eqnarray}
with $x=\cos\theta$. Motivated by this observation, in the present
paper we aim at evaluating the entropic integral
$E(C_n^{(\lambda)})$ for general values of the parameter $\lambda$
using representations of the same kind for the Gegenbauer
polynomials.

We begin by collecting, in Section \ref{section2}, the trigonometric
representations of Gegenbauer polynomials that will be used later
on. Our approach is developed in Section \ref{section3}, where we
show that it enables us to find completely analytical expressions
for $E(C_n^{(\lambda)})$, in terms of finite sums, whenever
$\lambda\in\mathbb{N}$. The new results obtained for the information
entropy of Gegenbauer polynomials of integer parameter are
summarized in Section \ref{section4}. Finally, in Section
\ref{section5} some concluding remarks are given and several open
problems are pointed out.

\section{Trigonometric representations for Gegenbauer
polynomials} \label{section2}

The most widely known trigonometric representation of the
Gegenbauer polynomials is (see e.g.\ \cite[p.\ 302]{AAR})
\begin{equation}\label{standard}
C_n^{(\lambda)}(\cos\theta) = \sum_{m=0}^n
d_{m,n}^{(\lambda)}{\rme}^{\rmi (n-2m)\theta} = \sum_{m=0}^n
d_{m,n}^{(\lambda)}\cos(n-2m)\theta\,,
\end{equation}
where
\begin{equation}\label{d}
d_{m,n}^{(\lambda)}=\frac{(\lambda)_m(\lambda)_{n-m}}{m!(n-m)!}\;.
\end{equation}
Another representation, due to Szeg\"o \cite{Sze,Sze2}, is
\begin{equation}\label{szego}
C_n^{(\lambda)}\left(\cos\theta\right) =
\frac{c_n^{(\lambda)}}{\left(\sin\theta\right)^{2\lambda-1}}\sum_{\nu=0}^\infty
\alpha_{\nu,n}^{(\lambda)}\sin
(n+2\nu+1)\theta\,,\quad\lambda>0\,,\,\lambda\notin\mathbb{N}\,,
\end{equation}
where
\begin{equation}\label{calfa}
c_n^{(\lambda)}=\frac{2^{2-2\lambda}\Gamma(n+2\lambda)}{\Gamma(\lambda)\Gamma(n+\lambda+1)}
\;,\quad \alpha_{\nu,n}^{(\lambda)} =
\frac{(1-\lambda)_{\nu}(n+1)_\nu}{\nu!(n+\lambda+1)_\nu} \;.
\end{equation}
At first sight, this representation seems to be less useful than the
previous one, because it contains infinitely many terms. Moreover,
it is supposed not to hold when $\lambda\in\mathbb{N}$. However, it
is not difficult to prove that the validity of (\ref{szego}) extends
to the case when $\lambda$ is a positive integer.

\begin{proposition}\label{propszego}
The Szeg\"o representation (\ref{szego}) holds true when
$\lambda\in\mathbb{N}$. In this case, it reads
\begin{equation}\label{szegotruncada}
C_n^{(\lambda)}\left(\cos\theta\right) =
\frac{c_n^{(\lambda)}}{\left(\sin\theta\right)^{2\lambda-1}}
\sum_{\nu=0}^{\lambda-1} \alpha_{\nu,n}^{(\lambda)}\sin
(n+2\nu+1)\theta\,.
\end{equation}
\end{proposition}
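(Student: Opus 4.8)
The plan is to read (\ref{szegotruncada}) off from (\ref{szego}) by a limiting argument in the parameter, the point being that the series truncates on its own at integer $\lambda$. The key structural fact is that the coefficient $\alpha_{\nu,n}^{(\lambda)}$ in (\ref{calfa}) carries the factor $(1-\lambda)_\nu$ in its numerator. For $\lambda=l\in\mathbb{N}$ one has $(1-l)_\nu=(1-l)(2-l)\cdots(\nu-l)$, and this product already contains a vanishing factor once $\nu\geq l$; hence $\alpha_{\nu,n}^{(l)}=0$ for every $\nu\geq l$, while $\alpha_{l-1,n}^{(l)}\neq0$. Consequently, as soon as (\ref{szego}) is known to remain valid at $\lambda=l$, only the terms $\nu=0,1,\dots,l-1$ survive and (\ref{szegotruncada}) follows at once. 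The whole task thus reduces to showing that the identity, established for non-integer $\lambda>0$, persists in the limit $\lambda\to l$.

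I would argue by continuity in $\lambda$, with $n$ and $\theta\in(0,\pi)$ held fixed. The left-hand side $C_n^{(\lambda)}(\cos\theta)$ is, by (\ref{standard})--(\ref{d}), a polynomial in $\lambda$, hence continuous, with value $C_n^{(l)}(\cos\theta)$ at $\lambda=l$. On the right-hand side I would work in a small real neighbourhood $U=[l-\varepsilon,l+\varepsilon]$; since $l\geq1$ one may take $\varepsilon$ small enough that $U\subset\{\lambda>1/2\}$. The prefactor $c_n^{(\lambda)}/(\sin\theta)^{2\lambda-1}$ involves only $\Gamma(n+2\lambda),\Gamma(\lambda),\Gamma(n+\lambda+1)$, none of which has a pole for $\lambda>0$, so it is continuous on $U$.

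The decisive step is to control the series $\sum_{\nu\geq0}\alpha_{\nu,n}^{(\lambda)}\sin(n+2\nu+1)\theta$ uniformly near $\lambda=l$. Writing the coefficients as ratios of Gamma functions gives the asymptotics $\alpha_{\nu,n}^{(\lambda)}\sim[\Gamma(n+\lambda+1)/(\Gamma(n+1)\Gamma(1-\lambda))]\,\nu^{-2\lambda}$ as $\nu\to\infty$. Because $2\lambda>1$ throughout $U$, this yields a convergent majorant of the form $C\nu^{-2(l-\varepsilon)}$ that is uniform in $\lambda\in U$, so by the Weierstrass test the series converges absolutely and uniformly on $U$ and defines a continuous function there. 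Restricting to $U\subset\{\lambda>1/2\}$ is exactly what lets me avoid the only conditionally convergent regime $0<\lambda\leq1/2$. With uniform convergence in hand the limit $\lambda\to l$ may be taken term by term: the finitely many terms with $\nu<l$ tend to their values at $\lambda=l$, whereas each term with $\nu\geq l$ tends to $0$, consistently with $1/\Gamma(1-\lambda)\to1/\Gamma(1-l)=0$.

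Both sides of (\ref{szego}) are therefore continuous functions of $\lambda$ on $U$ that coincide at every non-integer point of $U$, a set accumulating at $l$; by continuity they also coincide at $\lambda=l$, and there the surviving sum runs only over $0\leq\nu\leq l-1$, which is (\ref{szegotruncada}). (As a check, $l=1$ gives $c_n^{(1)}\alpha_{0,n}^{(1)}=1$ and recovers $U_n(\cos\theta)\sin\theta=\sin(n+1)\theta$, in agreement with (\ref{trigt}).) I expect the uniform control of the tail of the series to be the only genuine obstacle; the remaining ingredients are just continuity of elementary functions of $\lambda$.
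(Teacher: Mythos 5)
Your proof is correct, but it takes a genuinely different route from the paper's. The paper proves (\ref{szegotruncada}) by induction on the parameter: the base case $\lambda=1$ is the Chebyshev identity $U_n(\cos\theta)=\sin(n+1)\theta/\sin\theta$ from (\ref{trigt}), and the inductive step rewrites the recurrence $nC_n^{(\lambda)}(x)=(2\lambda+n-1)xC_{n-1}^{(\lambda)}(x)-2\lambda(1-x^2)C_{n-2}^{(\lambda+1)}(x)$ so as to express $C_n^{(m)}$ through $C_{n+1}^{(m-1)}$ and $C_{n+2}^{(m-1)}$, substitutes the truncated representation at parameter $m-1$, and reduces everything, via product-to-sum manipulations, to the coefficient identity $\alpha_{\nu-1,n+1}^{(m-1)}+\alpha_{\nu,n+1}^{(m-1)}-\frac{2(n+2)}{n+m+1}\alpha_{\nu-1,n+2}^{(m-1)}=\alpha_{\nu,n}^{(m)}$, verified by direct calculation; this is purely algebraic, involves no convergence questions, and never invokes the non-integer case (\ref{szego}) at all. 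Your argument instead takes Szeg\H{o}'s theorem for non-integer $\lambda>0$ as input and passes to the limit $\lambda\to l$, the truncation coming for free from $(1-l)_\nu=0$ for $\nu\geq l$; the one genuinely analytic step is the uniform tail bound $|\alpha_{\nu,n}^{(\lambda)}|\leq C\nu^{-2(l-\varepsilon)}$ on $U=[l-\varepsilon,l+\varepsilon]$, which you correctly isolate and which does hold (e.g., by Wendel-type two-sided bounds on Gamma ratios, uniform on compact parameter sets; note that the constant $1/\Gamma(1-\lambda)$ in your pointwise asymptotics vanishes at $\lambda=l$, so a uniform majorant rather than the asymptotic formula itself is what is needed, exactly as you indicate). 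Your restriction to neighbourhoods inside $\{\lambda>1/2\}$ is the right move to stay in the absolutely convergent regime, and your $l=1$ sanity check is consistent. What each approach buys: yours is shorter and conceptually transparent---the integer case appears as a removable degeneration of the generic one---but it is only as strong as the cited validity of (\ref{szego}) at non-integer parameters and requires real analysis for the uniformity; the paper's induction is computationally heavier but elementary and self-contained, establishing the finite-sum formula without appealing to the infinite-series result.
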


\begin{proof}
If $\lambda\in\mathbb{N}$ then $\alpha_{\nu,n}^{(\lambda)}=0$ when
$\nu\geq\lambda$, so that (\ref{szego}) reduces to
(\ref{szegotruncada}). We will prove this equality by induction on
$\lambda$. When $\lambda=1$ (\ref{szegotruncada}) is obviously true
since it reduces to the second equation in (\ref{trigt}), the
well-known trigonometric representation for the Chebyshev
polynomials of the second kind. Now, assume that the result holds
for $\lambda=m-1$ ($m\in\mathbb{N}$). We take advantage of the
following recurrence relation for the Gegenbauer polynomials
\cite[Eq.\ (4.7.27)]{Sze},
\begin{equation}
nC_n^{(\lambda)}(x) = (2\lambda+n-1)xC_{n-1}^{(\lambda)}(x) -
2\lambda(1-x^2)C_{n-2}^{(\lambda+1)}(x),
\end{equation}
which in trigonometric form ($x=\cos\theta$) can be restated as
\begin{equation}\label{gegrecursion}
\fl C_n^{(\lambda)}(\cos\theta) =
\frac{1}{2(\lambda-1)\sin^2\theta}\left[(2\lambda+n-1)\cos\theta
C_{n+1}^{(\lambda-1)}(\cos\theta) -
(n+2)C_{n+2}^{(\lambda-1)}(\cos\theta)\right].
\end{equation}
Using this formula for $\lambda=m$ and substituting
(\ref{szegotruncada}) on the right-hand-side we arrive at
\begin{eqnarray}
\fl C_n^{(m)}(\cos\theta) & = &
\frac{2c_n^{(m)}}{\left(\sin\theta\right)^{2m-1}} \left[\cos\theta
\sum_{\nu=0}^{m-2} \alpha_{\nu,n+1}^{(m-1)}\sin(n+2\nu+2)\theta
\right. \nonumber \\ \fl & & \left. -\frac{n+2}{n+m+1}
\sum_{\nu=0}^{m-2}\alpha_{\nu,n+2}^{(m-1)}\sin(n+2\nu+3)\theta
\right] \nonumber \\ \fl & = &
\frac{c_n^{(m)}}{\left(\sin\theta\right)^{2m-1}}
\left[\sum_{\nu=0}^{m-2}\alpha_{\nu,n+1}^{(m-1)}\sin(n+2\nu+3)\theta
+ \sum_{\nu=0}^{m-2}\alpha_{\nu,n+1}^{(m-1)}\sin(n+2\nu+1)\theta
\right. \nonumber \\ \fl & & \left. -\frac{2(n+2)}{n+m+1}
\sum_{\nu=0}^{m-2} \alpha_{\nu,n+2}^{(m-1)} \sin(n+2\nu+3)\theta \right] \nonumber \\
\fl & = & \frac{c_n^{(m)}}{\left(\sin\theta\right)^{2m-1}}
\sum_{\nu=0}^{m-1}\left(\alpha_{\nu-1,n+1}^{(m-1)}+\alpha_{\nu,n+1}^{(m-1)}
-\frac{2(n+2)}{n+m+1}\alpha_{\nu-1,n+2}^{(m-1)}\right) \nonumber \\
\fl & & \times \sin(n+2\nu+1)\theta\;,
\end{eqnarray}
where in the last step we have used that
$\alpha_{m-1,n+1}^{(m-1)}=0$ and $\alpha_{\nu,n+1}^{(m-1)}=0$
whenever $\nu<0$. A straightforward calculation shows that
\begin{equation}
\alpha_{\nu-1,n+1}^{(m-1)}+\alpha_{\nu,n+1}^{(m-1)} -
\frac{2(n+2)}{n+m+1}\alpha_{\nu-1,n+2}^{(m-1)} =
\alpha_{\nu,n}^{(m)}\;,
\end{equation}
and (\ref{szegotruncada}) is thus proved to hold also for
$\lambda=m$. \qed
\end{proof}

The fact that the sum in (\ref{szego}) terminates after a finite
number of terms when $\lambda\in\mathbb{N}$ suggests that
Szeg\"o's representation may be useful to evaluate the entropy of
Gegenbauer polynomials of integer parameter. Accordingly, in what
follows we shall assume that $\lambda\in\mathbb{N}$ unless
otherwise indicated.

\section{Evaluation of the entropic integral} \label{section3}

With the change of variable $x=\cos\theta$, the integral
(\ref{entrgeg}) takes the form
\begin{equation} \label{entrotrigo}
E(C_n^{(\lambda)}) = -\int_{0}^\pi \big(
            C_n^{(\lambda)}\left(\cos\theta\right)\big)^2
            \log\big(C_n^{(\lambda)}\left(\cos\theta\right)\big)^2
            \sin^{2\lambda} \theta \, \rmd \theta\;.
\end{equation}
Using Szeg\"o's representation (\ref{szegotruncada}) for one of the
two Gegenbauer polynomials in
$\big(C_n^{(\lambda)}(\cos\theta)\big)^2$, (\ref{entrotrigo}) can be
rewritten as
\begin{equation} \label{entrotrigo2}
E(C_n^{(\lambda)}) = -\frac{1}{2} \, c_n^{(\lambda)}
\sum_{\nu=0}^{\lambda -1} \alpha_{\nu,n}^{(\lambda)}
\left(J_{\nu,n}^{(\lambda)}-J_{\nu+1,n}^{(\lambda)}\right),
\end{equation}
where
\begin{equation}
J_{\nu,n}^{(\lambda)} := \int_0^\pi C_n^{(\lambda)}
(\cos\theta)\cos(n+2\nu)\theta\log\big(C_n^{(\lambda)}(\cos\theta)\big)^2
\rmd \theta \;.
\end{equation}
Now, using the standard representation (\ref{standard}) we have that
\begin{equation}
\fl C_n^{(\lambda)}(\cos\theta)\cos(n+2\nu)\theta = \frac{1}{2}
\sum_{m=0}^n d_{m,n}^{(\lambda)}\cos2(m+\nu)\theta + \frac{1}{2}
\sum_{m=0}^nd_{m,n}^{(\lambda)}\cos2(n-m+\nu)\theta \,.
\end{equation}
Taking into account the symmetry property $d_{m,n}^{(\lambda)} =
d_{n-m,n}^{(\lambda)}$, which readily follows from the explicit
expression of the coefficients $d_{m,n}^{(\lambda)}$, the previous
equation simplifies to
\begin{equation}
C_n^{(\lambda)}(\cos\theta)\cos(n+2\nu)\theta =\sum_{m=0}^n
d_{m,n}^{(\lambda)}\cos2(m+\nu)\theta\;,
\end{equation}
so that
\begin{equation} \label{intjota}
J_{\nu,n}^{(\lambda)}=\sum_{m=0}^n d_{m,n}^{(\lambda)}\int_0^\pi
\cos2(m+\nu)\theta\log\big(C_n^{(\lambda)}(\cos\theta)\big)^2 \rmd
\theta\;.
\end{equation}
Defining the integrals
\begin{equation}\label{integral}
I_{m,n}^{(\lambda)} := \int_0^\pi \cos(2m\theta) \log\big(
C_n^{(\lambda)}(\cos\theta)\big)^2 \rmd \theta\;,
\end{equation}
from (\ref{entrotrigo2}) and (\ref{intjota}) we find that
$E(C_n^{(\lambda)})$ is given by
\begin{equation} \label{entrotrigo3}
E(C_n^{(\lambda)})=-\frac{1}{2}\,c_n^{(\lambda)}
\sum_{\nu=0}^{\lambda - 1}\alpha_{\nu,n}^{(\lambda)} \sum_{m=0}^n
d_{m,n}^{(\lambda)} \left( I_{\nu+m,n}^{(\lambda)} - I_{\nu+1
+m,n}^{(\lambda)} \right).
\end{equation}

\setcounter{theorem}{0}

An alternative expression for the entropic integral
$E(C_n^{(\lambda)})$ which turns out to be more convenient in
practice can be obtained by noticing that
\begin{eqnarray}
\fl E(C_n^{(\lambda)}) & = -\frac{1}{2}\,c_n^{(\lambda)}
\sum_{\nu=0}^{\lambda - 1}\alpha_{\nu,n}^{(\lambda)}
\left\{\sum_{m=0}^n d_{m,n}^{(\lambda)}
 I_{\nu+m,n}^{(\lambda)} - \sum_{m=1}^{n+1} d_{m-1,n}^{(\lambda)}
 I_{\nu +m,n}^{(\lambda)} \right\}\nonumber\\
\fl & = -\frac{1}{2} \, c_n^{(\lambda)}\sum_{\nu=0}^{\lambda -
1}\alpha_{\nu,n}^{(\lambda)} \left\{\sum_{m=1}^{n} \left(
d_{m,n}^{(\lambda)}- d_{m-1,n}^{(\lambda)} \right)
I_{\nu+m,n}^{(\lambda)} + d_{0,n}^{(\lambda)}I_{\nu,n}^{(\lambda)}-
d_{n,n}^{(\lambda)}I_{\nu+n+1,n}^{(\lambda)} \right\}.\label{31}
\end{eqnarray}
According to (\ref{d}), $d_{m,n}^{(\lambda)}=0$ when $-\lambda<m<0$
or $n<m<n+\lambda$. Let us restrict initially to the case when
$\lambda\neq1$, so that $d_{-1,n}^{(\lambda)}=0$ and
$d_{n+1,n}^{(\lambda)}=0$. This allows us to write the previous
formula in the more compact form
\begin{eqnarray}
E(C_n^{(\lambda)}) & = -\frac{1}{2} \,
c_n^{(\lambda)}\sum_{\nu=0}^{\lambda -
1}\sum_{m=0}^{n+1}\alpha_{\nu,n}^{(\lambda)} \left(
d_{m,n}^{(\lambda)}-
d_{m-1,n}^{(\lambda)} \right) I_{\nu+m,n}^{(\lambda)}\nonumber\\
& = -\frac{1}{2} \, c_n^{(\lambda)}\sum_{\nu=0}^{\lambda -
1}\sum_{m=\nu}^{n+1+\nu}\alpha_{\nu,n}^{(\lambda)} \left(
d_{m-\nu,n}^{(\lambda)}- d_{m-\nu-1,n}^{(\lambda)} \right)
I_{m,n}^{(\lambda)}\;.
\end{eqnarray}
Using again that $d_{m,n}^{(\lambda)}=0$ when $-\lambda<m<0$ as well
as when $n<m<n+\lambda$, we can extend the lower and upper limits in
the inner summation to 1 and $n+\lambda-1$, respectively, provided
that the terms $m=0$ and $m=n+\lambda$ are treated separately. Thus
we find that
\begin{equation}\label{suma}
\fl E(C_n^{(\lambda)}) =-\frac{1}{2} \,
c_n^{(\lambda)}\left(\alpha_{0,n}^{(\lambda)}d_{0,n}^{(\lambda)}I_{0,n}^{(\lambda)}
-\alpha_{\lambda-1,n}^{(\lambda)}d_{n,n}^{(\lambda)}I_{n+\lambda,n}^{(\lambda)}
+\sum_{m=1}^{n+\lambda-1}\beta_{m,n}^{(\lambda)}I_{m,n}^{(\lambda)}\right),
\end{equation}
where
\begin{equation}\label{beta}
\beta_{m,n}^{(\lambda)}=\sum_{\nu=0}^{\lambda -
1}\alpha_{\nu,n}^{(\lambda)} \left( d_{m-\nu,n}^{(\lambda)}-
d_{m-\nu-1,n}^{(\lambda)} \right).
\end{equation}
It can be seen that (\ref{suma}) also holds when $\lambda=1$ by
noting that in this case its right-hand side coincides with that of
(\ref{31}).

In order to apply (\ref{suma}), we need to evaluate the integrals
$I_{m,n}^{(\lambda)}$ with $0 \leq m \leq n+\lambda$. This goal can
be achieved by means of complex integration techniques, which enable
us to obtain the following result.

\begin{theorem}\label{t1}
For $\lambda\in\mathbb{N}$,
\begin{equation} \label{icero}
I_{0,n}^{(\lambda)} = 2\pi \log\left( \frac{(\lambda)_n}{n!}\right)
\end{equation}
and, when $m\geq1$,
\begin{equation} \label{formulat1}
\fl I_{m,n}^{(\lambda)} = \frac{(2\lambda-1)\pi}{m} +
\frac{\pi}{(2m)!} \, \frac{d^{2m}}{dz^{2m}}\left. \left(
\log\sum_{\nu=0}^{\lambda-1} \alpha_{\nu,n}^{(\lambda)} \big(
z^{2n+2\lambda+2\nu}-z^{2\lambda-2\nu-2} \big) \right)\right|_{z=0}.
\end{equation}
\end{theorem}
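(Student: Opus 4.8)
The plan is to substitute Szeg\"o's representation (\ref{szegotruncada}) into the definition (\ref{integral}) of $I_{m,n}^{(\lambda)}$ and pass to the complex variable $z=\rme^{\rmi\theta}$. Using $\sin k\theta=(z^k-z^{-k})/(2\rmi)$ and $\sin\theta=(z^2-1)/(2\rmi z)$ turns (\ref{szegotruncada}) into
\[ C_n^{(\lambda)}(\cos\theta)=c_n^{(\lambda)}(-4)^{\lambda-1}\,\frac{Q(z)}{z^n(z^2-1)^{2\lambda-1}},\quad Q(z):=\sum_{\nu=0}^{\lambda-1}\alpha_{\nu,n}^{(\lambda)}\big(z^{2n+2\lambda+2\nu}-z^{2\lambda-2\nu-2}\big), \]
so that $Q$ is exactly the polynomial whose logarithm appears in (\ref{formulat1}). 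Since $|z|=1$ on the contour, $\log\big(C_n^{(\lambda)}(\cos\theta)\big)^2$ splits as $2\log\big(c_n^{(\lambda)}4^{\lambda-1}\big)+2\log|Q(z)|-2(2\lambda-1)\log|z^2-1|$, and I would evaluate $I_{m,n}^{(\lambda)}$ as the sum of the three corresponding integrals.

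Two of these are elementary. The constant term contributes only when $m=0$, giving $2\pi\log\big(c_n^{(\lambda)}4^{\lambda-1}\big)$. For the last, $|z^2-1|=2\sin\theta$ on $(0,\pi)$, and the classical value $\int_0^\pi\cos(2m\theta)\log(2\sin\theta)\,\rmd\theta=-\pi/(2m)$ for $m\geq1$ reproduces (through the factor $-2(2\lambda-1)$) precisely the term $(2\lambda-1)\pi/m$ of (\ref{formulat1}), while vanishing for $m=0$.

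The crux is the middle integral $2\int_0^\pi\cos(2m\theta)\log|Q(\rme^{\rmi\theta})|\,\rmd\theta$, where complex integration enters. The decisive structural fact is that every zero of $Q$ lies on the unit circle: matching the finite nonzero values $C_n^{(\lambda)}(\pm1)$ against the factor $(z^2-1)^{2\lambda-1}$ forces $Q$ to vanish to order exactly $2\lambda-1$ at $z=\pm1$, while the $n$ simple zeros of $C_n^{(\lambda)}$ in $(-1,1)$ supply $2n$ unimodular zeros $\rme^{\pm\rmi\theta_k}$; these $4\lambda-2+2n=\deg Q$ zeros are all of them. Hence $Q$ has no zeros in $|z|<1$, so $\log Q(z)$ is analytic there and its $z^{2m}$ Taylor coefficient is $\frac{1}{(2m)!}(\rmd^{2m}/\rmd z^{2m})\log Q|_{z=0}$. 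Because $Q$ has real coefficients, $\log|Q(\rme^{\rmi\theta})|=\mathrm{Re}\,\log Q(\rme^{\rmi\theta})$, so its cosine-Fourier coefficients coincide with these (real) Taylor coefficients; the middle integral therefore equals $\pi\,[z^{2m}]\log Q$ for $m\geq1$, and summing the three pieces gives (\ref{formulat1}).

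The main obstacle is this last identification, delicate precisely because the zeros of $Q$ sit on the circle itself, so the Taylor series of $\log Q$ has radius of convergence exactly $1$ and the passage to boundary Fourier coefficients is borderline. I would justify it by writing $\log|Q|=\log|Q(0)|+\sum_j\log|1-\rme^{\rmi\theta}/\rho_j|$ over the unimodular roots $\rho_j=\rme^{\rmi\phi_j}$ and summing the elementary series $\log|1-\rme^{\rmi(\theta-\phi_j)}|=-\sum_{k\geq1}k^{-1}\cos k(\theta-\phi_j)$; the reality of the resulting power sum $\sum_j\rho_j^{-2m}$, guaranteed by conjugate symmetry, is what removes the real part. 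The same expansion settles $m=0$: each unimodular root contributes nothing to the mean, so the middle integral reduces to $2\pi\log|Q(0)|$, and combining with the constant term yields $I_{0,n}^{(\lambda)}=2\pi\log\big|c_n^{(\lambda)}4^{\lambda-1}Q(0)\big|$. Evaluating $Q(0)=-\alpha_{\lambda-1,n}^{(\lambda)}$ from (\ref{calfa}) and simplifying the Pochhammer symbols collapses this to $2\pi\log\big((\lambda)_n/n!\big)$, which is (\ref{icero}).
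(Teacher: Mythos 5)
Your proposal is correct, and it reaches Theorem \ref{t1} by a genuinely different analytic route than the paper, even though the algebraic skeleton is the same. The paper also passes to $z=\rme^{\rmi\theta}$ and works with the same polynomial $Q$ (its $q(z)=z^nC_n^{(\lambda)}\big(\tfrac{z+z^{-1}}{2}\big)$ equals $Q(z)/(1-z^2)^{2\lambda-1}$ up to a constant), but it keeps the full complex logarithm under the integral, turning $I_{m,n}^{(\lambda)}$ into the real part of the contour integral $\frac{1}{2\rmi}\oint_{|z|=1}\frac{z^{4m}+1}{z^{2m+1}}\log q(z)\,\rmd z$, whose $2n$ logarithmic branch points sit on the contour; it then deforms the circle to a contour $\Gamma$ skirting the branch points with small arcs (cuts running outward), shows the arc contributions vanish as $\varepsilon\to0$ via $x\log x\to0$, and extracts both (\ref{icero}) and (\ref{formulat1}) from the single residue at the order-$(2m+1)$ pole $z=0$. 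You avoid contour deformation and residues entirely: after the same split of $\log\big|C_n^{(\lambda)}(\cos\theta)\big|$ into a constant, $\log|Q|$, and $\log|2\sin\theta|$ pieces, you handle the boundary singularities by factoring $Q$ over its roots and invoking the classical integral $\int_0^{2\pi}\cos(kt)\log\big|1-\rme^{\rmi t}\big|\,\rmd t=-\pi/k$, which the paper uses only for the $(1-z^2)^{2\lambda-1}$ factor. Your zero count at $z=\pm1$ (order exactly $2\lambda-1$, matching $\deg Q=2n+4\lambda-2$) is a point the paper does not need to spell out, since it reads the zeros of $q$ directly off those of $C_n^{(\lambda)}$; note that both arguments stand or fall on this same structural input, namely that all zeros of $Q$ are unimodular. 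What each approach buys: yours is more elementary (no branch-cut bookkeeping, no $\varepsilon$-limit) and makes the borderline boundary behaviour explicit through integrable logarithmic singularities; the paper's residue formula is uniform in $m$ (covering $m=0$ and $m\geq1$ at once) and transfers verbatim to the standard representation, yielding its Theorem \ref{t2} for all real $\lambda>-\tfrac12$ with no extra work. One refinement to your justification of the delicate step: rather than summing the conditionally convergent series $-\sum_{k\geq1}k^{-1}\cos k(\theta-\phi_j)$ term by term, simply integrate each factor $\log\big|1-\rme^{\rmi(\theta-\phi_j)}\big|$ directly against $\cos 2m\theta$ using the classical integral you already quoted (or note the series is the Fourier series of an $L^2$ function and apply Parseval); the conjugate symmetry of the roots then makes $\sum_j\rho_j^{-2m}$ real exactly as you say, identifying the cosine coefficients of $\log|Q|$ on the boundary with the interior Taylor coefficients of $\log Q$ despite the radius of convergence being exactly $1$.
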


\begin{proof}
Taking into account that
$C_n^{(\lambda)}(-x)=(-1)^nC_n^{(\lambda)}(x)$, (\ref{integral}) can
be written as
\begin{eqnarray} \label{imntrig}
I_{m,n}^{(\lambda)} & = \frac{1}{2}\int_0^{2\pi} \cos(2m\theta)
\log\big|C_n^{(\lambda)}(\cos\theta)\big|^2 \rmd\theta \nonumber \\
& = \int_0^{2\pi} \cos(2m\theta)
\log\big|C_n^{(\lambda)}(\cos\theta)\big| \, \rmd\theta \nonumber \\
& = \int_0^{2\pi} \cos(2m\theta) \log\big|{\rme}^{\rmi
n\theta}C_n^{(\lambda)}(\cos\theta)\big| \, \rmd\theta \;,
\end{eqnarray}
where in the last step the factor ${\rme}^{\rmi n\theta}$ has been
introduced for later convenience. Using the Szeg\"{o} representation
(\ref{szegotruncada}) for the Gegenbauer polynomial
$C_n^{(\lambda)}(\cos\theta)$, the previous equation reads
\begin{equation}
I_{m,n}^{(\lambda)}=\int_0^{2\pi} \cos(2m\theta) \log \left|
\frac{c_n^{(\lambda)}{\rme}^{\rmi n\theta}}{\sin^{2\lambda-1}\theta}
\sum_{\nu=0}^{\lambda-1}\alpha_{\nu,n}^{(\lambda)}\sin(n+2\nu+1)\theta
\right| \rmd\theta\;.
\end{equation}
We will compute the integral
\begin{equation}
\mathcal{I}_{m,n}^{(\lambda)}=\int_0^{2\pi} \cos(2m\theta) \log
\left(\frac{c_n^{(\lambda)}{\rme}^{\rmi n
\theta}}{\sin^{2\lambda-1}\theta} \sum_{\nu=0}^{\lambda-1}
\alpha_{\nu,n}^{(\lambda)}\sin(n+2\nu+1)\theta \right) \rmd\theta\;,
\end{equation}
whose real part equals $I_{m,n}^{(\lambda)}$. Introducing the change
of variable $z=\exp(\rmi\theta)$ we arrive at
\begin{equation}\label{i}
\mathcal{I}_{m,n}^{(\lambda)}=\frac{1}{2\rmi}\oint_{|z|=1}
\frac{z^{4m}+1}{z^{2m+1}} \log q(z) \rmd z \;,
\end{equation}
where
\begin{equation}\label{q}
\fl q(z) = z^n \, C_n^{(\lambda)}\left(\frac{z+z^{-1}}{2}\right) =
c_n^{(\lambda)}2^{2\lambda-2}(-1)^{\lambda}
\frac{\sum_{\nu=0}^{\lambda-1}\alpha_{\nu,n}^{(\lambda)} \big(
z^{2n+2\lambda+2\nu} - z^{2\lambda-2\nu-2}
\big)}{(1-z^2)^{2\lambda-1}}\;.
\end{equation}
The singularities of the integrand are $z=0$, which is a pole of
order $2m+1$, and all the zeros of $q(z)$, which are branch points.
If $\{x_{n,j}\}_{j=1}^n$ denote the zeros of $C_n^{(\lambda)}(x)$,
which are known to be simple, real and located in $(-1,1)$, then the
zeros $\{z_{n,j}\}_{j=1}^{2n}$ of the function $q(z)$ are
\begin{equation}\label{zeros}
z_{n,j+\frac{n}{2} \mp \frac{n}{2}}=\exp(\rmi\arccos
x_{n,j})=x_{n,j}\pm \rmi\sqrt{1-x_{n,j}^2} \;, \quad j=1,2,\ldots,n
\;.
\end{equation}
This means that the $\{z_{n,j}\}_{j=1}^{2n}$ are all located on the
unit circle, which can also be seen from the fact that
$z=\exp(\rmi\arccos x)$ maps $(-1,1)$ onto the unit circle.
Therefore, the integrand of (\ref{i}) has $2n$ branch points located
on the contour of integration. To avoid this difficulty we consider
the same integral along the slightly different contour $\Gamma$ (see
Figure \ref{figure1}), which is also closed. Notice that the
logarithmic branches can be chosen to go from the branch points to
the exterior of the unit disk, so that $\Gamma$ does not cross them.
Since the only singularity inside $\Gamma$ is $z=0$ we now have
\begin{equation}\label{residuo}
\oint_{\Gamma} \frac{z^{4m}+1}{z^{2m+1}} \log q(z) \rmd z = 2\pi
\rmi\, \mbox{Res} \left(\frac{z^{4m}+1}{z^{2m+1}} \log
q(z),z=0\right).
\end{equation}
\begin{figure}
\centering
\includegraphics{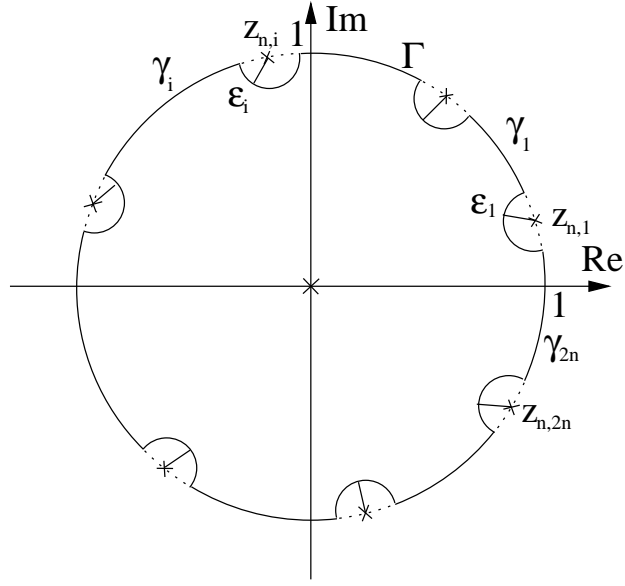}
\caption{Integration contour used to avoid the branch points on the
unit circle.} \label{figure1}
\end{figure}
The integral along $\Gamma$ can be decomposed as
\begin{equation}\label{integralGamma}
\fl \oint_\Gamma\frac{z^{4m}+1}{z^{2m+1}} \log q(z) \rmd z =
\sum_{j=1}^{2n}\left(\int_{\epsilon_j}\frac{z^{4m}+1}{z^{2m+1}} \log
q(z)\rmd z + \int_{\gamma_j}\frac{z^{4m}+1}{z^{2m+1}} \log q(z) \rmd
z \right),
\end{equation}
where $\epsilon_j$ denotes the arc of circumference of radius
$\varepsilon$ that surrounds the branch point $z_{n,j}$ and
$\gamma_j$ denotes the arc on the unit circle that connects
$\epsilon_j$ and $\epsilon_{j+1}$ ($\gamma_{2n}$ connects
$\epsilon_{2n}$ and $\epsilon_{1}$). Parameterizing
$z=z_{n,j}+\varepsilon {\rme}^{\rmi\theta}$ we find that
\begin{equation}
\fl \int_{\epsilon_j}\frac{z^{4m}+1}{z^{2m+1}} \log q(z) \rmd z =
\int_{\epsilon_j}\frac{(z_{n,j}+\varepsilon
{\rme}^{\rmi\theta})^{4m}+1}{(z_{n,j}+\varepsilon
{\rme}^{\rmi\theta})^{2m+1}} \log q \big( z_{n,j}+\varepsilon
{\rme}^{\rmi\theta} \big) \, \rmi\varepsilon
{\rme}^{\rmi\theta}d\theta\underset{\varepsilon\rightarrow0}{\longrightarrow}0\;,
\end{equation}
where we have used that $x\log x \to 0$ as $x \to 0$. Thus, taking
the limit $\varepsilon \to 0$ in (\ref{integralGamma}) we conclude
that
\begin{equation} \label{c1igualc2}
\oint_{|z|=1}\frac{z^{4m}+1}{z^{2m+1}} \log q(z) \rmd{z} =
\oint_\Gamma\frac{z^{4m}+1}{z^{2m+1}} \log q(z) \rmd{z} \;.
\end{equation}
Taking into account that the residue of a meromorphic function
$h(z)$ in a pole $z_0$ of order $2m+1$ is given by
\begin{equation}
\mbox{Res} \big( h(z), z=z_0 \big) = \left.
\frac{1}{(2m)!}\frac{d^{2m}}{dz^{2m}}\left((z-z_0)^{2m+1}
h(z)\right)\right|_{z=z_0} \,,
\end{equation}
use of (\ref{residuo}) and (\ref{c1igualc2}) into (\ref{i}) leads to
\begin{equation} \label{icompleja}
\mathcal{I}_{m,n}^{(\lambda)} = \left.
\frac{\pi}{(2m)!}\frac{d^{2m}}{dz^{2m}} \left[ (z^{4m}+1) \log
q(z) \right]\right|_{z=0}\,.
\end{equation}
In the case $m=0$, the previous equation reduces to
\begin{equation}
\mathcal{I}_{0,n}^{(\lambda)} = 2\pi\log q(0) =
2\pi\log\left(c_n^{(\lambda)}2^{2\lambda-2}
(-1)^{\lambda+1}\alpha_{\lambda-1,n}^{(\lambda)}\right),
\end{equation}
so that
\begin{equation}
I_{0,n}^{(\lambda)}=2\pi\log\left|c_n^{(\lambda)}2^{2\lambda-2}
\alpha_{\lambda-1,n}^{(\lambda)}\right|=2\pi \log\left(
\frac{(\lambda)_n}{n!}\right),
\end{equation}
which proves the first part of the theorem\footnote{This part can
also be proved using the mean value theorem for harmonic functions
(cf.\ \cite[Sec. VI]{Yan94}).}. On the other hand, if $m\geq1$ then
we readily see from (\ref{icompleja}) that
$\mathcal{I}_{m,n}^{(\lambda)} \in \mathbb{R}$, so
$I_{m,n}^{(\lambda)}=\mathcal{I}_{m,n}^{(\lambda)}$. Furthermore, in
this case the factor $(z^{4m}+1)$ in the right-hand side of
(\ref{icompleja}) can be omitted, since at $z=0$ its value equals
unity while all its derivatives do vanish. We thus find that
\begin{equation}
I_{m,n}^{(\lambda)} = \frac{\pi}{(2m)!}\frac{d^{2m}}{dz^{2m}}
\left. \left[ \log \left(
\frac{\sum_{\nu=0}^{\lambda-1}\alpha_{\nu,n}^{(\lambda)}
(z^{2n+2\lambda+2\nu}-z^{2\lambda-2\nu-2})}{(1-z^2)^{2\lambda-1}}
\right)\right]\right|_{z=0}\,,
\end{equation}
and (\ref{formulat1}) follows on noting that
\begin{equation}
\left.\frac{d^{2m}}{dz^{2m}}\left(\log(1-z^2)\right)\right|_{z=0}
= \frac{d^{2m}}{dz^{2m}}\left.
\left(-\sum_{k=1}^\infty\frac{z^{2k}}{k}\right)\right|_{z=0} =
-\frac{(2m)!}{m}
\end{equation}
when $m\neq0$. \qed
\end{proof}

The fact that the Szeg\"o representation (\ref{szegotruncada}) has
a finite number of terms plays an essential role in the proof of
Theorem \ref{t1}. Although we are mainly interested in evaluating
the integrals $I_{m,n}^{(\lambda)}$ when $\lambda \in\mathbb{N}$,
it is worth pointing out that these integrals can be calculated in
a similar way for all possible values of $\lambda$, provided that
we use the standard trigonometric representation (\ref{standard})
instead of the Szeg\"{o} representation for the Gegenbauer
polynomial inside the logarithm. This generalization is contained
in the next theorem.

\begin{theorem} \label{t2}
For $\lambda \in\mathbb{R}$, $\lambda>-\frac{1}{2}$,
\begin{equation} \label{t2a}
I_{0,n}^{(\lambda)} = 2\pi \log\left( \frac{(\lambda)_n}{n!}\right)
\end{equation}
and, when $m\geq1$,
\begin{equation} \label{formulat2}
I_{m,n}^{(\lambda)}=\frac{\pi}{(2m)!}\,\frac{d^{2m}}{dz^{2m}}
\left. \left( \log \sum_{j=0}^{n}
d_{j,n}^{(\lambda)}z^{2n-2j}\right)\right|_{z=0}\;.
\end{equation}
\end{theorem}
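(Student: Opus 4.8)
The plan is to reproduce the proof of Theorem~\ref{t1} almost line for line, replacing only the Szeg\"o representation by the standard trigonometric representation (\ref{standard}) for the Gegenbauer polynomial inside the logarithm. The opening manipulations (\ref{imntrig}) use nothing but the parity $C_n^{(\lambda)}(-x)=(-1)^nC_n^{(\lambda)}(x)$, so I would start from
\[
I_{m,n}^{(\lambda)}=\int_0^{2\pi}\cos(2m\theta)\,\log\big|{\rme}^{\rmi n\theta}C_n^{(\lambda)}(\cos\theta)\big|\,\rmd\theta,
\]
introduce the complex integral $\mathcal{I}_{m,n}^{(\lambda)}$ whose real part is $I_{m,n}^{(\lambda)}$, and pass to the variable $z={\rme}^{\rmi\theta}$ exactly as in (\ref{i}).

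The one computation that changes is the identification of the function $q(z)$ appearing under the logarithm. Substituting (\ref{standard}) and using that ${\rme}^{\rmi n\theta}\,d_{j,n}^{(\lambda)}{\rme}^{\rmi(n-2j)\theta}=d_{j,n}^{(\lambda)}z^{2n-2j}$, one finds
\[
q(z)=z^nC_n^{(\lambda)}\!\left(\frac{z+z^{-1}}{2}\right)=\sum_{j=0}^n d_{j,n}^{(\lambda)}\,z^{2n-2j}.
\]
The decisive feature is that this $q$ is a genuine polynomial of degree $2n$, with none of the rational factor $(1-z^2)^{-(2\lambda-1)}$ of the Szeg\"o form (\ref{q}). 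Its constant term is $d_{n,n}^{(\lambda)}=(\lambda)_n/n!\neq0$, so $z=0$ is not a zero of $q$, $\log q$ is analytic there, and the only pole of the integrand at the origin is again the order-$(2m+1)$ pole from $z^{-(2m+1)}$. The $2n$ zeros of $q$ are, as in (\ref{zeros}), the images ${\rme}^{\rmi\arccos x_{n,j}}$ of the zeros $x_{n,j}$ of $C_n^{(\lambda)}$, and these lie in $(-1,1)$ precisely when $\lambda>-\frac{1}{2}$; hence all branch points of $\log q$ sit on the unit circle, which is exactly the configuration already handled in Theorem~\ref{t1}.

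From this point the contour argument transfers verbatim: I would deform $|z|=1$ to the contour $\Gamma$ skirting the branch points, note that the small arcs vanish as $\varepsilon\to0$ (the $x\log x\to0$ estimate applies because each branch point is a true zero of the polynomial $q$, so $\log q$ has only an integrable logarithmic singularity), and obtain (\ref{c1igualc2}) and then (\ref{icompleja}) with the new $q$. For $m=0$ this gives $\mathcal{I}_{0,n}^{(\lambda)}=2\pi\log q(0)=2\pi\log\big((\lambda)_n/n!\big)$, whose real part is (\ref{t2a}); for $m\geq1$ the factor $(z^{4m}+1)$ may be dropped because its derivatives of orders $1,\dots,2m$ vanish at the origin, and $\mathcal{I}_{m,n}^{(\lambda)}$ is real since $q$ has real coefficients, yielding (\ref{formulat2}). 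I would stress that, unlike in (\ref{formulat1}), no term $\frac{(2\lambda-1)\pi}{m}$ appears: it arose there solely from differentiating $-(2\lambda-1)\log(1-z^2)$, which is now absent. There is thus no real obstacle beyond the single structural input---the location of the zeros of $C_n^{(\lambda)}$ in $(-1,1)$---which is exactly where the hypothesis $\lambda>-\frac{1}{2}$ is used.
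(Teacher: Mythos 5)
Your proposal is correct and is essentially the paper's own proof: the paper proves Theorem~\ref{t2} exactly by repeating the contour argument of Theorem~\ref{t1} with the complex form of the standard representation (\ref{standard}) in place of the Szeg\"o one, arriving at $\mathcal{I}_{m,n}^{(\lambda)}=\frac{\pi}{(2m)!}\frac{d^{2m}}{dz^{2m}}\big[(z^{4m}+1)\log\big(\sum_{j=0}^{n}d_{j,n}^{(\lambda)}z^{2n-2j}\big)\big]\big|_{z=0}$ and dropping the factor $(z^{4m}+1)$. Your identification of $q(z)=z^{n}C_n^{(\lambda)}\big(\frac{z+z^{-1}}{2}\big)$ as a degree-$2n$ polynomial with $q(0)=(\lambda)_n/n!$, of the zeros on the unit circle as the only place the hypothesis $\lambda>-\frac12$ enters, and of the absence of the $\frac{(2\lambda-1)\pi}{m}$ term all match the paper's (much terser) argument.
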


\begin{proof}
We proceed as in the proof of Theorem \ref{t1}, but now we use the
complex form of the standard trigonometric representation
(\ref{standard}) for the Gegenbauer polynomial
$C_n^{(\lambda)}(\cos\theta)$ in (\ref{imntrig}). Thus we arrive at
\begin{equation}
I_{m,n}^{(\lambda)} = \frac{\pi}{(2m)!} \frac{d^{2m}}{dz^{2m}}
\left. \left[(z^{4m}+1)\log \left( \sum_{j=0}^{n}
d_{j,n}^{(\lambda)}z^{2n-2j} \right) \right]\right|_{z=0}\,,
\end{equation}
from which (\ref{t2a}) and (\ref{formulat2}) readily follow. \qed
\end{proof}

In order to carry out the sums in (\ref{suma}), the next step is to
obtain closed formulas for the derivatives $I_{m,n}^{(\lambda)}$
with $1 \leq m \leq n+\lambda$. Despite its greater generality,
Theorem \ref{t2} turns out to be less useful than Theorem \ref{t1},
because (\ref{formulat1}) expresses the integrals in terms of the
logarithm of a polynomial that has $2\lambda$ terms, while in
(\ref{formulat2}) they are given in terms of the logarithm of a
polynomial with $n+1$ terms. As we shall see, the difficulty in
obtaining a closed formula for the derivatives of such functions
increases with the number of terms in the polynomial. Therefore, if
we want an expression of $I_{m,n}^{(\lambda)}$ for a fixed value of
$\lambda$ and any $n \in \mathbb{N}$ Theorem \ref{t1} is more
helpful, particularly for small values of $\lambda$.

In the case $\lambda=1$ we readily notice from (\ref{formulat1})
that, if $1\leq m\leq n+1$, then
\begin{eqnarray}
I_{m,n}^{(1)} & =
\frac{\pi}{m}+\frac{\pi}{(2m)!}\left.\frac{d^{2m}}{dz^{2m}}
\left(\log(1-z^{2n+2})\right)\right|_{z=0} \nonumber \\ \label{i1} &
= \frac{\pi}{m}+\frac{\pi}{(2m)!}\left.\frac{d^{2m}}{dz^{2m}}
\left(-\sum_{k=1}^\infty\frac{z^{(2n+2)k}}{k}\right)\right|_{z=0}
\nonumber \\ & = \pi\left(\frac{1}{m}-\delta_{m,n+1}\right).
\end{eqnarray}
When $\lambda\geq2$, the polynomial inside the logarithm has more
terms and the above trick does not work. However, we can obtain
closed formulas for the derivatives in (\ref{formulat1}) by means
of Fa\`a di Bruno's formula for the derivatives of the composition
of two functions, which states that (see e.g.\ \cite{Roman})
\begin{equation}\label{faa}
\frac{d^mf\big(g(z)\big)}{dz^m} = m! \sum_{k=0}^m
f^{(k)}\big(g(z)\big) \sum_{k_1,k_2,\ldots,k_m}
\prod_{j=1}^m\frac{[g^{(j)}(z)]^{k_j}}{(j!)^{k_j}k_j!}\;,
\end{equation}
where the inner summation is extended over all partitions
satisfying
\begin{equation}\label{particion}
k_1+k_2+\cdots+k_m=k\,,\quad k_1+2k_2+\cdots+mk_m=m\,.
\end{equation}
This formula enables us to find explicit expressions for
$I_{m,n}^{(\lambda)}$ with $\lambda \geq 2$, as stated in the
following two propositions.

\setcounter{theorem}{1}

\begin{proposition}\label{prop3}
In the case $\lambda=2$, when $1\leq m\leq n+2$
\begin{equation}\label{i2}
I_{m,n}^{(2)}=\frac{\pi}{m}\left[ \,
3-\left(\frac{n+3}{n+1}\right)^{\! \! m} \, \right] +
\pi\frac{n+3}{n+1}\,\delta_{m,n+2}\;.
\end{equation}
\end{proposition}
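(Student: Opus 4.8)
The plan is to specialize Theorem~\ref{t1} to $\lambda=2$ and to extract the required Taylor coefficient directly, exploiting the constraint $m\le n+2$ to discard every higher-order contribution. Since $2\lambda-1=3$, formula (\ref{formulat1}) instantly supplies the leading term $3\pi/m$, so the entire content of the proposition reduces to evaluating
\[
\frac{\pi}{(2m)!}\,\frac{d^{2m}}{dz^{2m}}\left.\log R(z)\right|_{z=0}=\pi\,[z^{2m}]\log R(z),
\]
where $R(z)=\sum_{\nu=0}^{1}\alpha_{\nu,n}^{(2)}\big(z^{2n+4+2\nu}-z^{4-2\nu-2}\big)$ and $[z^{2m}]$ denotes the coefficient of $z^{2m}$ in the Maclaurin expansion.

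First I would read off the coefficients from (\ref{calfa}): $\alpha_{0,n}^{(2)}=1$ and $\alpha_{1,n}^{(2)}=-(n+1)/(n+3)$, which gives
\[
R(z)=\frac{n+1}{n+3}-z^2+z^{2n+4}-\frac{n+1}{n+3}\,z^{2n+6}.
\]
Because $R$ is a polynomial in $w:=z^2$, one has $[z^{2m}]\log R=[w^m]\log R$. I would then factor out the constant term $\frac{n+1}{n+3}$, which merely shifts $\log R$ by a constant and hence contributes nothing for $m\ge1$, reducing the task to the normalized polynomial
\[
\widetilde R(w)=1-aw+aw^{n+2}-w^{n+3},\qquad a:=\frac{n+3}{n+1}.
\]

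The key step is to split off the low-degree factor by writing $\widetilde R(w)=(1-aw)\big(1+u(w)\big)$ with $u(w)=w^{n+2}(a-w)/(1-aw)$, so that $\log\widetilde R=\log(1-aw)+\log(1+u)$. Since $u=O(w^{n+2})$, the expansion $\log(1+u)=u-\frac{1}{2}u^2+\cdots$ has its quadratic and higher terms starting at order $w^{2n+4}$, which exceeds $n+2$ for every $n\ge0$; hence, for $1\le m\le n+2$ only $\log(1-aw)$ and the linear term $u$ are relevant. Reading off coefficients gives $[w^m]\log(1-aw)=-a^m/m$ and $[w^m]u=a\,\delta_{m,n+2}$, the lowest monomial of $u$ being $aw^{n+2}$, whence $[w^m]\log\widetilde R=-a^m/m+a\,\delta_{m,n+2}$. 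Substituting $a=(n+3)/(n+1)$ and combining with the leading $3\pi/m$ yields (\ref{i2}).

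The only delicate point is the bookkeeping of orders: one must verify that the restriction $m\le n+2$ is exactly what guarantees both that the $u^2$ contribution to $\log(1+u)$ stays invisible and that the tail of $\widetilde R$ first surfaces precisely at $m=n+2$, which is the origin of the Kronecker delta. No appeal to Fa\`a di Bruno's formula (\ref{faa}) is needed here, the two-term structure of the Szeg\"o representation for $\lambda=2$ rendering the direct expansion transparent; the heavier machinery becomes unavoidable only for larger $\lambda$, where the polynomial inside the logarithm carries more terms.
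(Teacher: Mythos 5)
Your proof is correct, but it takes a genuinely different route from the paper's. The paper's own proof of Proposition \ref{prop3} specializes Fa\`a di Bruno's formula (\ref{faa}) to the derivatives in (\ref{formulat1}): since the only nonvanishing derivatives at $z=0$ of the polynomial inside the logarithm occur at orders $j=2$ and $j=2n+4$, the partition conditions (\ref{particion}) collapse to $k_2+k_{2n+4}=k$ and $2k_2+(2n+4)k_{2n+4}=2m$, whose unique solution is $k_2=k=m$ for $m\leq n+1$, with the extra solution $k_{2n+4}=k=1$, $k_2=0$ appearing at $m=n+2$ and producing the Kronecker delta. You bypass this partition bookkeeping entirely by a multiplicative splitting: after normalizing the constant term (which, as you note, contributes nothing for $m\geq1$), you write $\widetilde{R}(w)=(1-aw)\bigl(1+u(w)\bigr)$ with $a=(n+3)/(n+1)$ and $u(w)=w^{n+2}(a-w)/(1-aw)$, so that $\log\widetilde{R}$ splits into $\log(1-aw)$, whose $m$-th coefficient $-a^m/m$ gives the power term, and $\log(1+u)$, whose quadratic part starts at order $w^{2n+4}>w^{n+2}$ and is therefore invisible in the range $1\leq m\leq n+2$, the linear part of $u$ first surfacing exactly at $m=n+2$ with coefficient $a$ --- the delta term. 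One small remark: $(1-aw)$ is not an exact polynomial divisor of $\widetilde{R}$, so your $u$ is a rational function rather than a polynomial, but since $1/(1-aw)$ is a well-defined formal power series and $u=O(w^{n+2})$, the manipulation is harmless; your coefficient extraction $[z^{2m}]\log R$ is formally identical to the derivative in (\ref{formulat1}) because $R(0)=(n+1)/(n+3)\neq0$. Your values $\alpha_{0,n}^{(2)}=1$ and $\alpha_{1,n}^{(2)}=-(n+1)/(n+3)$ agree with (\ref{calfa}), and the final assembly with the leading term $(2\lambda-1)\pi/m=3\pi/m$ reproduces (\ref{i2}) exactly. As for what each approach buys: yours is more elementary and makes the provenance of both the $\bigl(\frac{n+3}{n+1}\bigr)^m$ term and the delta completely transparent, whereas the paper's Fa\`a di Bruno machinery, heavier in this instance, is the method that scales to $\lambda\geq3$ (Proposition \ref{prop4}), where the convenient two-term factor structure you exploit is no longer available --- a limitation you yourself identify correctly in your closing remark.
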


\begin{proof}
In this case, application of Fa\`{a} di Bruno's formula (\ref{faa})
to the derivatives in (\ref{formulat1}) gives\footnote{Notice that
attending to (\ref{particion}) $k=0$ corresponds to $m=0$, so we can
start the sum in $k$ from 1.}
\begin{eqnarray} \label{faa2}
\fl \frac{d^{2m}}{dz^{2m}} \left. \left( \log \sum_{\nu=0}^{1}
\alpha_{\nu,n}^{(2)}(z^{2n+4+2\nu}-z^{2-2\nu}) \right) \right|_{z=0}
= (2m)! \sum_{k=1}^{2m} \left. \frac{d^k}{dz^k} (\log
z)\right|_{z=-\alpha_{1,n}^{(2)}} \nonumber \\ \times
\sum_{k_1,k_2,\ldots,k_{2m}} \prod_{j=1}^{2m}\frac{\left[ \left.
\frac{d^j}{dz^j} \, \big( \sum_{\nu=0}^{1} \alpha_{\nu,n}^{(2)}
(z^{2n+4+2\nu}-z^{2-2\nu}) \big) \right|_{z=0}
\right]^{k_j}}{(j!)^{k_j}k_j!}\;.
\end{eqnarray}
On the one hand, for $k\geq1$,
\begin{equation} \label{dklog}
\frac{d^k}{dz^k}(\log z)=\frac{(-1)^{k+1}(k-1)!}{z^k}\;,
\end{equation}
so that
\begin{equation}
\left. \frac{d^k}{dz^k} (\log z) \right|_{z=-\alpha_{1,n}^{(2)}} =
- \frac{(k-1)!}{(\alpha_{1,n}^{(2)})^k} \;.
\end{equation}
On the other hand, all the derivatives of the polynomial in
(\ref{faa2}) vanish at $z=0$ except when $j=2$ and $j=2n+4$, so we
must set $k_j=0$ if $j\neq2$ and $j\neq2n+4$. Conditions
(\ref{particion}) then read
\begin{equation}
k_2+k_{2n+4}=k\,, \quad 2k_2+(2n+4)k_{2n+4}=2m\,.
\end{equation}
Since $k_2$ and $k_{2n+4}$ are non-negative integers, these
equations only admit the solution $k_{2n+4}=0,k_2=k=m$ when $m\leq
n+1$, while in the case $m=n+2$ we have to add the solution
$k_{2n+4}=k=1,k_2=0$ to the previous one. Therefore, (\ref{faa2})
simplifies to
\begin{equation}
\fl \frac{d^{2m}}{dz^{2m}} \left. \left( \log \sum_{\nu=0}^{1}
\alpha_{\nu,n}^{(2)}(z^{2n+4+2\nu}-z^{2-2\nu}) \right) \right|_{z=0}
= -(2m)! \left[ \, \frac{1}{m} \left( -
\frac{\alpha_{0,n}^{(2)}}{\alpha_{1,n}^{(2)}} \right)^{\! \! m} +
\frac{\alpha_{0,n}^{(2)}}{\alpha_{1,n}^{(2)}} \,\delta_{m,n+2} \,
\right],
\end{equation}
and the result follows using the second equation in (\ref{calfa}).
\qed
\end{proof}

\begin{proposition}\label{prop4}
For any $\lambda\in\N$, $\lambda\geq3$, when $1\leq m\leq n+\lambda$
\begin{eqnarray} \label{lambdageq4}\nonumber
I_{m,n}^{(\lambda)} & = & \frac{(2\lambda-1)\pi}{m} -
\pi\frac{\alpha_{0,n}^{(\lambda)}}{\alpha_{\lambda-1,n}^{(\lambda)}}\;
\delta_{m,n+\lambda} \\ \nonumber & & - \pi
\left(\frac{\alpha_{\lambda-3,n}^{(\lambda)}}{\alpha_{\lambda-2,n}^{(\lambda)}}
\right)^{\!\!m}\sum_{k=1}^m\sum_{k_6=0}^{[m/3]}\sum_{k_8=0}^{[m/4]}\cdots
\sum_{k_{2(\lambda-1)=0}}^{[m/(\lambda-1)]}
\left(-\frac{(\alpha_{\lambda-2,n}^{(\lambda)})^2}
{\alpha_{\lambda-1,n}^{(\lambda)}\alpha_{\lambda-3,n}^{(\lambda)}}
\right)^{\!\!k}\\\nonumber & & \times
\frac{(k-1)!}{\big(2k-m+\sum_{r=1}^{\lambda-3}rk_{2r+4}\big)!\,
\big(m-k-\sum_{s=1}^{\lambda-3}(s+1)k_{2s+4}\big)!}\\ & & \times
\prod_{j=3}^{\lambda-1}\frac{1}{(k_{2j})!}
\left(-\frac{\alpha_{\lambda-1-j,n}^{(\lambda)}
(\alpha_{\lambda-2,n}^{(\lambda)})^{j-2}}{(\alpha_{\lambda-3,n}^{(\lambda)})^{j-1}}
\right)^{\!\!k_{2j}}\,,
\end{eqnarray}
where in the upper limits of the summations over $k_6$, $k_8$,
\ldots, $k_{2(\lambda-1)}$ the square brackets denote integer part
of the expression within.

In particular, in the case $\lambda = 3$, when $1\leq m\leq n+3$
\begin{equation}\label{i3}
I_{m,n}^{(3)} = \frac{\pi}{m} \bigg[ \, 5 - 2 \Re \big( f(n)^m
\big) \, \bigg] - \pi\frac{(n+4)(n+5)}{(n+1)(n+2)} \,
\delta_{m,n+3} \,,
\end{equation}
where
\begin{equation} \label{efedeene}
f(n)=\frac{(n+1)(n+5)+\rmi\sqrt{3(n+1)(n+5)}}{(n+1)(n+2)}\;.
\end{equation}
\end{proposition}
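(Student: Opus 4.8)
The plan is to substitute the $2\lambda$-term polynomial appearing inside the logarithm in (\ref{formulat1}) into Fa\`a di Bruno's formula (\ref{faa}), exactly as was done for $\lambda=2$ in the proof of Proposition \ref{prop3}, and then to reorganise the resulting multiple sum. Write $g(z)=\sum_{\nu=0}^{\lambda-1}\alpha_{\nu,n}^{(\lambda)}(z^{2n+2\lambda+2\nu}-z^{2\lambda-2\nu-2})$ for the argument of the logarithm. Its low-degree block is $-\sum_{r=0}^{\lambda-1}\alpha_{\lambda-1-r,n}^{(\lambda)}z^{2r}$, so $g(0)=-\alpha_{\lambda-1,n}^{(\lambda)}$ and, by (\ref{dklog}), the outer factors are $f^{(k)}(g(0))=-(k-1)!/(\alpha_{\lambda-1,n}^{(\lambda)})^{k}$; its high-degree block $\sum_{\nu}\alpha_{\nu,n}^{(\lambda)}z^{2n+2\lambda+2\nu}$ has all exponents at least $2n+2\lambda\ge 2m$. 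The only way the high block can enter a partition of $2m$ is through the single exponent $2n+2\lambda$ when $m=n+\lambda$, via $k_{2n+2\lambda}=1$, $k=1$; this produces exactly the isolated term $-\pi\,\alpha_{0,n}^{(\lambda)}/\alpha_{\lambda-1,n}^{(\lambda)}$, i.e.\ the $\delta_{m,n+\lambda}$ contribution, which I would peel off first.

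For the bulk only the even-order multiplicities $k_2,k_4,\ldots,k_{2(\lambda-1)}$ survive, and $g^{(2r)}(0)=(2r)!\,(-\alpha_{\lambda-1-r,n}^{(\lambda)})$ collapses each Fa\`a di Bruno factor to $(-\alpha_{\lambda-1-r,n}^{(\lambda)})^{k_{2r}}/k_{2r}!$. The two conditions (\ref{particion}) read $\sum_r k_{2r}=k$ and $\sum_r r\,k_{2r}=m$; the key algebraic step is to use them to eliminate $k_2$ and $k_4$ in favour of $k$, $m$ and the free multiplicities $k_6,\dots,k_{2(\lambda-1)}$, obtaining $k_2=2k-m+\sum_{r=1}^{\lambda-3}r\,k_{2r+4}$ and $k_4=m-k-\sum_{s=1}^{\lambda-3}(s+1)k_{2s+4}$. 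These are precisely the arguments of the two factorials in (\ref{lambdageq4}), and $k_2,k_4\ge 0$ together with $1\le k\le m$ fix the summation ranges $[m/j]$.

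What remains is bookkeeping: insert these values of $k_2,k_4$ into the monomial $(\alpha_{\lambda-1,n}^{(\lambda)})^{-k}\prod_r(\alpha_{\lambda-1-r,n}^{(\lambda)})^{k_{2r}}$ and redistribute the powers of $\alpha_{\lambda-2,n}^{(\lambda)}$ and $\alpha_{\lambda-3,n}^{(\lambda)}$ so that the prefactor $(\alpha_{\lambda-3,n}^{(\lambda)}/\alpha_{\lambda-2,n}^{(\lambda)})^{m}$, the $k$-th power $(-(\alpha_{\lambda-2,n}^{(\lambda)})^{2}/(\alpha_{\lambda-1,n}^{(\lambda)}\alpha_{\lambda-3,n}^{(\lambda)}))^{k}$ and the $k_{2j}$-th powers on the free variables emerge; restoring the $(2\lambda-1)\pi/m$ term from (\ref{formulat1}) then gives (\ref{lambdageq4}). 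I expect this redistribution---keeping exponents and signs aligned as the two eliminated variables are reintroduced through the prefactors---to be the only delicate point, everything else being a direct extension of the $\lambda=2$ calculation.

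For $\lambda=3$ the product over $j$ in (\ref{lambdageq4}) is empty, leaving only $k_2=2k-m$, $k_4=m-k$ and a single sum over $k$; the cleanest route is to factor the quartic directly. Using $\alpha_{0,n}^{(3)}=1$, $\alpha_{1,n}^{(3)}=-2(n+1)/(n+4)$ and $\alpha_{2,n}^{(3)}=(n+1)(n+2)/[(n+4)(n+5)]$ from (\ref{calfa}), I would write $1+(\alpha_{1,n}^{(3)}/\alpha_{2,n}^{(3)})z^{2}+(\alpha_{0,n}^{(3)}/\alpha_{2,n}^{(3)})z^{4}=(1-f(n)z^{2})(1-\overline{f(n)}\,z^{2})$, where $f(n),\overline{f(n)}$ are the reciprocals of the roots of the quadratic in $z^{2}$. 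Its discriminant is negative (yielding $\sqrt{3(n+1)(n+5)}$), so the two reciprocal roots are the complex conjugates (\ref{efedeene}). Expanding each logarithm geometrically, the coefficient of $z^{2m}$ is $-(f(n)^{m}+\overline{f(n)}^{\,m})/m=-2\Re(f(n)^{m})/m$; since the prefactor $\pi/(2m)!$ in (\ref{formulat1}) extracts $\pi$ times this coefficient, adding $5\pi/m$ and the $\delta_{m,n+3}$ term (coefficient $\alpha_{0,n}^{(3)}/\alpha_{2,n}^{(3)}=(n+4)(n+5)/[(n+1)(n+2)]$) reproduces (\ref{i3}).
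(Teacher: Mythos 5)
Your treatment of the general case is essentially the paper's own proof: the same application of Fa\`a di Bruno's formula (\ref{faa}) to the $2\lambda$-term polynomial in (\ref{formulat1}), the same observation that the high-degree block can only enter through $k_{2n+2\lambda}=k=1$ when $m=n+\lambda$ (producing the $\delta_{m,n+\lambda}$ term with coefficient $-\pi\alpha_{0,n}^{(\lambda)}/\alpha_{\lambda-1,n}^{(\lambda)}$), the same collapse of the bulk to the even multiplicities $k_2,\ldots,k_{2(\lambda-1)}$, and the same elimination of $k_2,k_4$ via (\ref{particion}), with identical expressions $k_2=2k-m+\sum_{r=1}^{\lambda-3}rk_{2r+4}$ and $k_4=m-k-\sum_{s=1}^{\lambda-3}(s+1)k_{2s+4}$; like you, the paper fixes the summation ranges by treating inverse factorials of negative integers as zero. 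One warning about the step you rightly flag as the only delicate point: if you actually perform the redistribution, the signs in $(-\alpha_{\lambda-2,n}^{(\lambda)})^{k_2}(-\alpha_{\lambda-3,n}^{(\lambda)})^{k_4}\prod_{j\geq3}(-\alpha_{\lambda-1-j,n}^{(\lambda)})^{k_{2j}}$ combine to $(-1)^{k_2+k_4+\sum_j k_{2j}}=(-1)^k$, which is entirely absorbed by the factor $\bigl(-(\alpha_{\lambda-2,n}^{(\lambda)})^2/(\alpha_{\lambda-1,n}^{(\lambda)}\alpha_{\lambda-3,n}^{(\lambda)})\bigr)^{k}$, so the product over $j\geq 3$ emerges with a \emph{plus} sign inside the parentheses; checking small cases (e.g.\ $\lambda=4$, $m=3$, the partition $k_6=1$) against a direct series expansion of the logarithm confirms this, so the minus sign printed inside the $j$-product of (\ref{lambdageq4}) deserves scrutiny --- for $\lambda=3$ the product is empty and (\ref{i3}) is unaffected, which is why the paper's fully verified cases do not expose the issue.

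Where you genuinely depart from the paper is the $\lambda=3$ specialization. The paper reduces (\ref{lambdageq4}) to the single sum $\sum_{k=1}^m\frac{(k-1)!}{(2k-m)!(m-k)!}x^k$ and evaluates it via the binomial identity \cite[Eq.\ (5.75)]{concrete}. You instead return to (\ref{formulat1}) and factor the quartic, $1+(\alpha_{1,n}^{(3)}/\alpha_{2,n}^{(3)})z^2+(\alpha_{0,n}^{(3)}/\alpha_{2,n}^{(3)})z^4=(1-f(n)z^2)(1-\overline{f(n)}\,z^2)$, reading the coefficient of $z^{2m}$ off the logarithmic series as $-2\Re\big(f(n)^m\big)/m$. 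Your numbers check out: $f(n)+\overline{f(n)}=2(n+5)/(n+2)$, $|f(n)|^2=(n+4)(n+5)/[(n+1)(n+2)]$, and the discriminant $-12(n+5)/[(n+1)(n+2)^2]$ produces exactly (\ref{efedeene}), while the high block contributes $-\pi(n+4)(n+5)/[(n+1)(n+2)]$ at $m=n+3$ as required. Your route is in effect a proof of the cited summation formula in disguise --- the closed forms $\bigl((1\pm\sqrt{1+4z})/2\bigr)^m$ in \cite[Eq.\ (5.75)]{concrete} are the $m$-th powers of your reciprocal roots --- and it buys self-containedness and automatic sign control, bypassing (\ref{lambdageq4}) altogether for $\lambda=3$; the paper's route buys uniformity, exhibiting (\ref{i3}) as a genuine specialization of the general formula used for $\lambda\geq4$.
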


\begin{proof}
In the general case ($\lambda \in \mathbb{N}$, $\lambda \geq 3$),
application of Fa\`{a} di Bruno's formula (\ref{faa}) to the
derivatives in (\ref{formulat1}) and use of (\ref{dklog}) lead to
\begin{eqnarray} \label{dergenlambda}
\fl \frac{d^{2m}}{dz^{2m}} \left. \left( \log
\sum_{\nu=0}^{\lambda-1} \alpha_{\nu,n}^{(\lambda)}
(z^{2n+2\lambda+2\nu}-z^{2\lambda-2\nu-2})\right)\right|_{z=0}
\nonumber \\ \fl = -(2m)! \sum_{k=1}^{2m}
\frac{(k-1)!}{(\alpha_{\lambda-1,n}^{(\lambda)})^k}
\sum_{k_1,k_2,\ldots,k_{2m}} \prod_{j=1}^{2m}
\frac{\left[\left.\frac{d^j}{dz^j} \big(
\sum_{\nu=0}^{\lambda-1}\alpha_{\nu,n}^{(\lambda)}\,z^{2n+2\lambda+2\nu}
\big) \right|_{z=0}\right]^{k_j}}{(j!)^{k_j}k_j!} \nonumber \\ \fl -
(2m)!
\sum_{k=1}^{2m}\frac{(k-1)!}{(\alpha_{\lambda-1,n}^{(\lambda)})^k}
\sum_{k_1,k_2,\ldots,k_{2m}} \prod_{j=1}^{2m}
\frac{\left[\left.\frac{d^j}{dz^j} \big(
-\sum_{\nu=0}^{\lambda-1}\alpha_{\nu,n}^{(\lambda)}
\,z^{2\lambda-2\nu-2} \big)
\right|_{z=0}\right]^{k_j}}{(j!)^{k_j}k_j!}\;.
\end{eqnarray}
In the first term of the right-hand side all derivatives vanish at
$z=0$ except when $j=2n+2\lambda$, so that $k_j=0$ whenever
$j\neq2n+2\lambda$ and conditions (\ref{particion}) simplify to
\begin{equation}
k_{2n+2\lambda}=k\,,\quad (2n+2\lambda)k_{2n+2\lambda}=2m\,,
\end{equation}
which only admit the solution $k_{2n+2\lambda}=k=1$ when
$m=n+\lambda$. In the second term the derivatives that do not vanish
are those with $j$ even, $2\leq j\leq 2\lambda-2$, so that
conditions (\ref{particion}) now read
\begin{equation}\label{particionlambda}
\sum_{r=1}^{\lambda-1}k_{2r}=k\,,\quad\sum_{s=1}^{\lambda-1}sk_{2s}=m\,.
\end{equation}
Equation (\ref{dergenlambda}) thus reduces to
\begin{eqnarray}
\fl \frac{d^{2m}}{dz^{2m}} \left. \left( \log
\sum_{\nu=0}^{\lambda-1} \alpha_{\nu,n}^{(\lambda)}
(z^{2n+2\lambda+2\nu}-z^{2\lambda-2\nu-2})\right) \right|_{z=0} =
-(2n+2\lambda)!
\frac{\alpha_{0,n}^{(\lambda)}}{\alpha_{\lambda-1,n}^{(\lambda)}} \,
\delta_{m,n+\lambda} \nonumber \\ -(2m)! \sum_{k=1}^{2m}
\frac{(k-1)!}{(\alpha_{\lambda-1,n}^{(\lambda)})^k}
\sum_{k_2,k_4,\ldots,k_{2(\lambda-1)}} \prod_{j=1}^{2m}
\frac{(-\alpha_{\lambda-1-j,n}^{(\lambda)})^{k_{2j}}}{(k_{2j})!}\;.
\end{eqnarray}
Finally, we can further simplify the previous expression to obtain
(\ref{lambdageq4}) by using conditions (\ref{particionlambda}) to
write $k_2$ and $k_4$ in terms of the remaining indices, i.e.
\begin{eqnarray}
& k_2=k_6+2k_8+\cdots+(\lambda-3)k_{2(\lambda-1)}+2k-m\,,\nonumber\\
& k_4=-2k_6-3k_8-\cdots-(\lambda-2)k_{2(\lambda-1)}+m-k\,.
\end{eqnarray}
Notice that in (\ref{lambdageq4}) conditions (\ref{particionlambda})
are guaranteed to hold because for the values of the indices that do
not fulfill them we get the inverse of the factorial of a negative
integer, which can be considered to be zero. We have changed the
upper limit in the sum over $k$ from $2m$ to $m$ because when
$m+1\leq k\leq2m$ conditions (\ref{particionlambda}) are not
fulfilled.

In the case $\lambda=3$, (\ref{lambdageq4}) reduces to
\begin{equation}\label{isuma}
\fl I_{m,n}^{(3)} = \frac{5\pi}{m} - \pi \left(
\frac{\alpha_{0,n}^{(3)}}{\alpha_{1,n}^{(3)}} \right)^{\! \! m}
\sum_{k=1}^m \frac{(k-1)!}{(2k-m)!(m-k)!}
\left(-\frac{(\alpha_{1,n}^{(3)})^2}{\alpha_{2,n}^{(3)}
\alpha_{0,n}^{(3)}}\right)^{\! \! k} -
\pi\frac{\alpha_{0,n}^{(3)}}{\alpha_{2,n}^{(3)}}\,\delta_{m,n+3}\;,
\end{equation}
so we need to evaluate a sum of the form
\begin{equation}
\fl \sum_{k=1}^{m}\frac{(k-1)!}{(2k-m)!(m-k)!}x^k =
\sum_{j=0}^{m-1}\frac{(m-j-1)!}{j!(m-2j)!}x^{m-j}=
x^m\sum_{j=0}^{m-1} \left( \begin{array}{c}
m-j \\
j \\
\end{array} \right) \frac{(x^{-1})^{j}}{m-j}\;.
\end{equation}
Using the summation formula \cite[Eq.\ (5.75)]{concrete}
\begin{equation}
\sum_{j=0}^{m-1}
\left(\begin{array}{c}
m-j \\
j \\
\end{array}
\right)\frac{m}{m-j}\,z^j =
\left(\frac{1+\sqrt{1+4z}}{2}\,\right)^{\! \! m} +
\left(\frac{1-\sqrt{1+4z}}{2}\,\right)^{\! \! m}
\end{equation}
and the second equation in (\ref{calfa}), we find that
\begin{equation}
I_{m,n}^{(3)} = \frac{5\pi}{m} - \frac{\pi}{m} \left[
f(n)^m+f^*(n)^m \right] - \pi \frac{(n+4)(n+5)}{(n+1)(n+2)} \,
\delta_{m,n+3} \;,
\end{equation}
which is equivalent to (\ref{i3}). \qed
\end{proof}

\section{Results for the information entropy} \label{section4}

Equations (\ref{icero}), (\ref{i1}), (\ref{i2}) and (\ref{i3})
enable us to derive closed analytical formulas for
$E(C_n^{(\lambda)})$ when $\lambda=1,2,3$. For $\lambda=1$, after
substitution of the corresponding values of the constants
$d_{m,n}^{(\lambda)}$, $c_n^{(\lambda)}$ and
$\alpha_{\nu,n}^{(\lambda)}$ (see (\ref{d}) and (\ref{calfa})),
(\ref{suma}) reduces to
\begin{equation}\label{ec1}
E(C_n^{(1)})=-\frac{1}{2} \left( I_{0,n}^{(1)} - I_{n+1,n}^{(1)}
\right),
\end{equation}
which using (\ref{icero}) and (\ref{i1}) immediately leads to
\begin{equation}\label{gegentlambda1}
E(C_n^{(1)}) = E(U_n) = \frac{\pi}{2}\left(\frac{1}{n+1}-1\right).
\end{equation}
When $\lambda=2$, (\ref{suma}) takes the form
\begin{equation}
E(C_n^{(2)})=-\frac{1}{8} \left( (n+1)(n+3) I_{0,n}^{(2)}- (n+1)^2
I_{n+2,n}^{(2)} - 4\sum_{m=1}^{n+1} m I_{m,n}^{(2)} \right),
\end{equation}
so using (\ref{icero}) and (\ref{i2}) together with the well-known
formula for the sum of a geometric series,
\begin{equation} \label{geosum}
\sum_{m=1}^n x^m = \frac{x(1-x^n)}{1-x}\,,
\end{equation}
we find that
\begin{eqnarray} \label{gegentlambda2}
\fl E(C_n^{(2)}) = -\frac{\pi}{8} \left( 2(n+1)(n+3)\log(n+1) +
\frac{n^3-5n^2-29n-27}{n+2} \right. \nonumber \\ \left. +
\frac{(n+3)^{n+3}}{(n+2)(n+1)^{n+1}} \right).
\end{eqnarray}
Recalling (\ref{eogeonrel}), (\ref{gegentlambda1}) and
(\ref{gegentlambda2}) are readily shown to be equivalent to
(\ref{echebyshev2}) and (\ref{gegl2entropy}), respectively.

In the case $\lambda=3$, (\ref{suma}) can be rewritten as
\begin{eqnarray}\nonumber
\fl E(C_n^{(3)}) = -\frac{1}{128} \left[ (n+1)(n+2)(n+4)(n+5)
I_{0,n}^{(3)} \right. \\
\left. -(n+1)^2(n+2)^2 I_{n+3,n}^{(3)} -12 \sum_{m=1}^{n+2}
m(n^2+6n+7-2m^2) I_{m,n}^{(3)}\right]\,.
\end{eqnarray}
Substituting (\ref{icero}) and (\ref{i3}) into the above expression,
we encounter again the geometric sum, as well as a sum of the form
$\sum_mm^2x^m$. Using (\ref{geosum}) and the summation formula
\cite[Eq.\ (5.14.9)]{Han}
\begin{equation}
\sum_{m=1}^n m^2 x^m = \frac{x(1+x) -
x^{n+1}\big[(n+1)^2-(2n^2+2n-1)x+n^2x^2\big]}{(1-x)^{3}}\,,
\end{equation}
after a tedious but straightforward calculation we arrive at the
following closed analytical formula for $E(C_n^{(3)})$, which is a
new result:
\begin{eqnarray}\nonumber
\fl E(C_n^{(3)}) & = & -\frac{\pi}{128} \left\{
2(n+1)(n+2)(n+4)(n+5)\log\left(\frac{(n+1)(n+2)}{2}\right) \right.
\\\nonumber \fl & & +
\frac{n^5-16n^4-269n^3-1200n^2-2102n-1250}{n+3} \\\nonumber
        \fl & & + \frac{2(n+5)^2}{(n+2)(n+3)}\,\Re\left[ \left(
        \frac{(n+1)(n+5)+\rmi\sqrt{3(n+1)(n+5)}}{(n+1)(n+2)}\;
        \right)^{\!\! n+1}\right. \\\label{ec3}
        \fl & & \left.\left. \times
        \left(2n^2+13n+14- \rmi (n+1)(n+6)\sqrt{\frac{(n+1)(n+5)}{3}}\;\;\right)
        \right] \right\}.
\end{eqnarray}

When $\lambda \geq 4$, combination of (\ref{suma}) and
(\ref{lambdageq4}) provide an expression for the entropy
$E(C_n^{(\lambda )})$ in terms of finite sums. For the sake of
brevity, in (\ref{suma}) it is convenient to absorb the term
corresponding to $I_{n+\lambda,n}^{(\lambda)}$ into the sum over $m$
by setting
\begin{equation}
\beta_{n+\lambda,n}^{(\lambda)} :=
-\alpha_{\lambda-1,n}^{(\lambda)}d_{n,n}^{(\lambda)}
\end{equation}
instead of using (\ref{beta}) with $m=n+\lambda$, which would give
for $\beta_{n+\lambda,n}^{(\lambda)}$ the value
$-\alpha_{\lambda-1,n}^{(\lambda)}d_{n,n}^{(\lambda)} +
\alpha_{0,n}^{(\lambda)}d_{n+\lambda,n}^{(\lambda)}$. We thus have
that
\begin{eqnarray}
\fl E(C_n^{(\lambda)}) & = & -\frac{\pi}{2}\,
c_n^{(\lambda)}\left\{2\alpha_{0,n}^{(\lambda)}d_{0,n}^{(\lambda)}
\log\left( \frac{(\lambda)_n}{n!}\right)
+\alpha_{0,n}^{(\lambda)}d_{n,n}^{(\lambda)} +(2\lambda-1)
\sum_{m=1}^{n+\lambda}\frac{\beta_{m,n}^{(\lambda)}}{m}\right.
\nonumber\\ \fl & & - \sum_{m=1}^{n+\lambda}
\sum_{k=1}^m\sum_{k_6=0}^{[m/3]}\sum_{k_8=0}^{[m/4]}\cdots
\sum_{k_{2(\lambda-1)=0}}^{[m/(\lambda-1)]}\beta_{m,n}^{(\lambda)}
\left(
\frac{\alpha_{\lambda-3,n}^{(\lambda)}}{\alpha_{\lambda-2,n}^{(\lambda)}}
\right)^{\!\!m} \left(
-\frac{(\alpha_{\lambda-2,n}^{(\lambda)})^2}{\alpha_{\lambda-1,n}^{(\lambda)}
\alpha_{\lambda-3,n}^{(\lambda)}}
\right)^{\!\!k}\nonumber\\\nonumber \fl & & \times
\frac{(k-1)!}{\big(2k-m+\sum_{r=1}^{\lambda-3}rk_{2r+4}\big)!\,
\big(m-k-\sum_{s=1}^{\lambda-3}(s+1)k_{2s+4}\big)!}\\
\fl & & \left. \times \prod_{j=3}^{\lambda-1} \frac{1}{(k_{2j})!}
\left( -\frac{\alpha_{\lambda-1-j,n}^{(\lambda)}
(\alpha_{\lambda-2,n}^{(\lambda)})^{j-2}}{
(\alpha_{\lambda-3,n}^{(\lambda)})^{j-1}}
\right)^{\!\!k_{2j}}\right\}\,.\label{lambdamayor4}
\end{eqnarray}
Unlike (\ref{buyarov}), (\ref{lambdamayor4}) is completely
analytical for all $\lambda \in \mathbb{N}$, which makes it suitable
for symbolic computation. For instance, a Maple implementation of
the formula enabled us to obtain the closed analytical expressions
for $E(C_n^{(4)})$ and $E(C_n^{(5)})$, with $1 \leq n \leq 15$, that
are displayed in Tables \ref{table1} and \ref{table2}, respectively.
In these tables we also provide numerical values of the entropies
obtained from the exact ones, in order that the interested reader
can compare them with those given by numerical algorithms such as
that in \cite{Buy04}.

\begin{table}
\caption{Exact and numerical values of the entropy $E(C_n^{(4)})$
for $1 \leq n \leq 15$.}
\centering
{\footnotesize
\begin{tabular}{|r|r|r|}
\hline {n} & {Exact value} & {Numerical value} \\\hline
    &   &   \\
1   & $\displaystyle{-7\pi\log(2) + \frac{119}{240}\pi}$ & $-13.685$ \\
    &   &   \\
2   & $\displaystyle{-\frac{105}{8}\pi\log(10)
+\frac{580771}{300000}\pi}$ & $-88.862$ \\
    &   &   \\
3   & $\displaystyle{-\frac{75}{2}\pi\log(20) + \frac{95}{21}\pi}$
& $-338.714$ \\
    &   &   \\
4   & $\displaystyle{-\frac{5775}{64}\pi\log(35) +
\frac{4883222845}{632481024}\pi}$ & $-983.613$\\
    &   &   \\
5   & $\displaystyle{-\frac{385}{2}\pi\log(56) +
\frac{17355685}{1806336}\pi}$ & $-2404.173$ \\
    &   &   \\
6   & $\displaystyle{-{\frac {3003}{8}}\,\pi\,\log  \left( 84
\right)
+ {\frac {6449434961}{1058158080}}\,\pi}$ & $- 5206.005$ \\
    &   &   \\
7   & $\displaystyle{-{\frac {1365}{2}}\,\pi\,\log  \left( 120
\right) -{\frac {1396715852287}{139218750000}}\,\pi}$ & $-10296.556$ \\
    &   &   \\
8   & $\displaystyle{-{\frac {75075}{64}}\,\pi\,\log  \left( 165
\right) -{\frac {24757176334716125}{493018566815808}}\,\pi}$ &
$-18974.368$ \\
    &   &   \\
9   & $\displaystyle{ -1925\,\pi\,\log  \left( 220 \right) -{\frac
{1200329915}{9135984}}\,\pi}$ & $-33031.075$ \\
    &   &   \\
10  & $\displaystyle{ -{\frac {12155}{4}}\,\pi\,\log  \left( 286
\right) -{\frac
{325291539600149215255}{1172732412725203616}}\,\pi}$ &
$-54866.421$ \\
    &   &   \\
11  & $\displaystyle{ -4641\,\pi\,\log  \left( 364 \right) -{\frac
{31458443588344487293819}{60436675052957701680}}\,\pi}$ & $-87616.538$ \\
    &   &   \\
12  & $\displaystyle{ -{\frac {440895}{64}}\,\pi\,\log  \left( 455
\right) -{\frac
{25537984326378849719971131}{28270687046875000000000}}\,\pi}$ & $-135295.739$ \\
    &   &   \\
13  & $\displaystyle{ -9975\,\pi\,\log  \left( 560 \right) -{\frac
{1779685691911133495}{1202109806542848}}\,\pi}$ & $-202952.031$ \\
    &   &   \\
14  & $\displaystyle{ -{\frac {56525}{4}}\,\pi\,\log  \left( 680
\right) -{\frac
{36234350694889865223938313068785}{15613637127259094259005915136}}\,\pi}$
& $-296836.555$ \\
    &   &   \\
15  & $\displaystyle{ -19635\,\pi\,\log  \left( 816 \right)
-{\frac
{130243656594168370141034405}{37115886521993021558784}}\,\pi}$ & $-424587.139$ \\
    &   &   \\\hline
\end{tabular}
\label{table1} } \vspace*{-13pt}
\end{table}

\begin{table}
\caption{Exact and numerical values of the entropy $E(C_n^{(5)})$
for $1 \leq n \leq 15$.}
\centering
{\footnotesize
\begin{tabular}{|r|r|r|}
\hline {n} & {Exact value} & {Numerical value} \\\hline
    &   &   \\
1   & $\displaystyle{ -{\frac {525}{128}}\,\pi\,\log  \left( 5
\right) +{\frac {945}{1024}}\,\pi}$
&$-17.839$ \\
    &   &   \\
2   & $\displaystyle{ -{\frac {2475}{128}}\,\pi\,\log  \left( 15
\right) +{\frac
{27685925}{5225472}}\,\pi}$ &$-147.857$\\
    &   &   \\
3   & $\displaystyle{ -{\frac {17325}{256}}\,\pi\,\log  \left( 35
\right) +{\frac
{61634724075}{3373232128}}\,\pi}$ &$-698.499$ \\
    &   &   \\
4   & $\displaystyle{ -{\frac {25025}{128}}\,\pi\,\log  \left( 70
\right) +{\frac
{5573831525}{115605504}}\,\pi}$ &$-2457.981$ \\
    &   &   \\
5   & $\displaystyle{ -{\frac {63063}{128}}\,\pi\,\log  \left( 126
\right) +{\frac
{338107973281463}{3173748645888}}\,\pi}$ &$-7150.909$ \\
    &   &   \\
6   & $\displaystyle{ -{\frac {143325}{128}}\,\pi\,\log  \left(
210 \right) +{\frac
{20887195}{101376}}\,\pi}$ &$-18162.369$ \\
    &   &   \\
7   & $\displaystyle{ -{\frac {75075}{32}}\,\pi\,\log  \left( 330
\right) +{\frac
{1408430247274269205}{3944148534526464}}\,\pi}$ &$-41620.201$ \\
    &   &   \\
8   & $\displaystyle{ -{\frac {294525}{64}}\,\pi\,\log  \left( 495
\right) +{\frac
{806559968327725}{1438588584576}}\,\pi}$ &$-87940.792$ \\
    &   &   \\
9   & $\displaystyle{ -{\frac {546975}{64}}\,\pi\,\log  \left( 715
\right) +{\frac
{29915266041851863399425}{37527437207206515712}}\,\pi}$ &$-173958.634$ \\
    &   &   \\
10  & $\displaystyle{ -{\frac {969969}{64}}\,\pi\,\log  \left(
1001 \right) +{\frac
{97664804776687286561309}{96698680084732322688}}\,\pi}$ &$-325775.232$ \\
    &   &   \\
11  & $\displaystyle{ -{\frac {6613425}{256}}\,\pi\,\log  \left(
1365 \right) +{\frac
{230209361727271224010045}{212679240849405517824}}\,\pi} $ &$-582478.486$ \\
    &   &   \\
12  & $\displaystyle{ -{\frac {2723175}{64}}\,\pi\,\log  \left(
1820 \right) +{\frac
{10188450005911283085}{12635587626401792}}\,\pi}$ &$-1000899.539$ \\
    &   &   \\
13  & $\displaystyle{ -{\frac {4352425}{64}}\,\pi\,\log  \left(
2380 \right) -{\frac
{39663465263970548089202600252605}{249818194036145508144094642176}}\,\pi}$
&$-1661590.212$ \\
    &   &   \\
14  & $\displaystyle{ -{\frac {6774075}{64}}\,\pi\,\log  \left(
3060 \right) -{\frac
{717231543067734581588054629334401175}{306761704661739640893688309874688}}\,\pi}$
&$-2676220.464$ \\
    &   &   \\
15  & $\displaystyle{ -{\frac {5148297}{32}}\,\pi\,\log  \left(
3876 \right) -{\frac
{535111116210266542852402527915814650511}{82233794352493419438828330115762176}}\,\pi}$
&$-4196611.889$ \\
    &   &   \\\hline
\end{tabular}
\label{table2} } \vspace*{-13pt}
\end{table}

\section{Summary and conclusions} \label{section5}

The problem of obtaining closed analytical formulas for the
entropy of orthogonal polynomials is known to be very difficult,
as displayed by the fact that in previous work on the subject
formulas of this kind were only found for the Gegenbauer
polynomials of parameter $\lambda = 0, 1, 2$. Here we have
presented a new approach to the calculation of the entropy of
Gegenbauer polynomials, based on the use of trigonometric
representations for these polynomials, which has allowed us to
explicitly evaluate the entropic integrals by means of complex
analysis techniques. Using this method we have been able to derive
in a unified way closed formulas of $E(C_n^{(\lambda)})$ for
$\lambda=1, 2, 3$, the last one being new. Furthermore, when
$\lambda\geq4$, $\lambda \in \mathbb{N}$, we have obtained
completely analytical expressions of the entropy in terms of
finite sums, which easily provide exact values for the entropy
using symbolic computation. The growing complexity in the formulas
of $E(C_n^{(\lambda )})$ as $\lambda$ increases serves as a clear
illustration of the difficulties posed by the calculation of the
entropy of orthogonal polynomials.

When the parameter $\lambda$ is not a positive integer, the
Szeg\"{o} representation (\ref{szego}) of the Gegenbauer polynomial
$C_n^{(\lambda)}$ has infinitely many terms, so the same happens for
the expressions (\ref{entrotrigo2}) and (\ref{entrotrigo3}) of the
entropic integral $E(C_n^{(\lambda)})$. It remains open the problem
of studying the convergence behaviour of these series, as well as
that of summing up them analytically. It would be of particular
interest to obtain exact analytical expressions for the entropy of
Gegenbauer polynomials of half-integer parameter since, as already
mentioned in Sec.\ \ref{section1}, they are needed together with
those of the integer case in order to evaluate the information
entropy of spherical and hyperspherical harmonics. Finally, it would
also be desirable to extend the method introduced in this paper to
other families of orthogonal polynomials having trigonometric
representations, a line of research that is currently being
developed.

\begin{ack}

The second author (S. Gandy) gratefully acknowledges the hospitality
of the Departamento de Matem\'aticas of the Universidad Carlos III
de Madrid, where this research was carried out, as well as financial
support from the European Union Socrates/Erasmus Programme. The work
of the first and third authors (J.I. de Vicente and J.
S\'anchez-Ruiz) was supported by Universidad Carlos III de Madrid,
Comunidad Aut\'onoma de Madrid (project No.\ UC3M-MTM-05-033), and
Direcci\'on General de Investigaci\'on (MEC) of Spain under grant
MTM2006-13000-C03-02. The work of the third author was also
supported by the Direcci\'on General de Investigaci\'on (MEC) of
Spain grant FIS2005-00973, and the Junta de Andaluc\'{\i}a research
group FQM-0207.

\end{ack}

\section*{References}

\end{document}